\definecolor{navy}{RGB}{0,0,128}
\definecolor{dodgerblue}{RGB}{30,144,255}
\newcommand{\relu}{\text{ReLU}\xspace{}}
\newcommand{\mysubsection}[1]{\medskip\noindent\textbf{#1}}
\newcommand{\sat}{\texttt{SAT}}
\newcommand{\unsat}{\texttt{UNSAT}}
\newcommand{\rn}[1]{\mathbb{R}^{#1}}
\newcommand{\nn}{\mathcal{N}}
\newtheorem{Example}{Example}
\tikzstyle{every pin edge}=[<-,shorten <=1pt]
\tikzstyle{neuron}=[circle,fill=black!25,minimum size=17pt,inner sep=0pt]
\tikzstyle{input neuron}=[neuron, fill=green!50]
\tikzstyle{output neuron}=[neuron, fill=red!50]
\tikzstyle{hidden neuron}=[neuron, fill=blue!50]
\tikzstyle{merged neuron}=[neuron, fill=orange!50]
\tikzstyle{annot} = [text width=6em, text centered]
\tikzstyle{nnedge} = [-{stealth},shorten >=0.1cm, shorten <=0.05cm,line width=0.8pt,black]
\tikzstyle{proofNode} = [rounded rectangle, fill=red!30]
\tikzstyle{lemmaNode} = [rounded rectangle, fill=dodgerblue!30]
\tikzstyle{lemmaNodeRemoved} = [rounded rectangle, fill=black!10,text=black!75]
\tikzstyle{proofEdge} = [-{stealth},shorten >=0.1cm, shorten <=0.05cm,line width=0.8pt,black]
\setlist[itemize]{leftmargin=*}
\title{Proof Minimization in Neural Network Verification}
\author{
 	Omri Isac\inst{1},
 	Idan Refaeli\inst{1},
 	Haoze Wu\inst{2},
 	Clark Barrett\inst{3},
 	Guy Katz\inst{1}
}
\institute{ 
	The Hebrew University of Jerusalem, Israel \and Amherst College, USA \and Stanford University, USA
}
\begin{document}

\maketitle
\begin{abstract}
	The widespread adoption of deep neural networks (DNNs) requires efficient techniques for verifying their safety.
	DNN verifiers are complex tools, which might contain bugs that
        could compromise their soundness and undermine the reliability of the verification process.
	This concern can be mitigated  using \emph{proofs}: artifacts
        that are checkable by an external and reliable proof checker,
        and which attest to the correctness of the verification process.
	However, such proofs tend to be extremely large, limiting their use in many scenarios.
	In this work, we address this problem by minimizing proofs of unsatisfiability produced by DNN verifiers.
	We present algorithms that remove facts which were learned
        during the verification process, but which are unnecessary for the proof itself.
	Conceptually, our method analyzes the dependencies among facts
        used to deduce \unsat{}, and removes facts that did not contribute. We then further minimize the proof by eliminating remaining unnecessary dependencies, using two alternative procedures.
	We implemented our algorithms on top of a proof producing DNN verifier, and evaluated them across several benchmarks.
	Our results show that our best-performing algorithm reduces proof size by 
        $37\%-82\%$ and proof checking time 
        by $30\%-88\%$, while introducing a
        runtime overhead of  $7\%-20\%$  to the verification process itself.
\end{abstract}

\section{Introduction}
\label{sec:Introduction}
Deep neural networks (DNNs) have made great strides in recent years,
becoming the state-of-the-art solution for a variety of tasks in
medicine~\cite{RaChBaTo22}, autonomous
driving~\cite{BoDeDwFiFlGoJaMoMuZhZhZhZi16}, natural language
processing~\cite{Op22}, and many other domains. However, DNNs lack
structure that humans can readily interpret, rendering them opaque and
potentially jeopardizing their trustworthiness~\cite{ChDiHiHaNuRiRuTr18}.  One prominent example
is the sensitivity of DNNs to small input perturbations, which can
lead to significant and undesirable changes to the networks'
outputs. This sensitivity can be exploited
maliciously~\cite{SzZaSuBrErGoFe13}, possibly leading to catastrophic
results.

The pervasiveness of DNNs, combined with their potential
vulnerability, has made DNN verification a growing research field
within the verification
community~\cite{BrMuBaJoLi23,BrBaJoWu24,AkKeLoPi19,AvBlChHeKoPr19,BaShShMeSa19,GeMiDrTsCHVe18,HuKwWaWu17,LyKoKoWoLiDa20,PuTa10,SaDuMo19,GaGePuVe19,TrBaXiJo20,WaPeWhYaJa18,ZhShGuGuLeNa20,GoPlSeRuSa21}.
Modern DNN verifiers typically use techniques such as SMT
solving~\cite{KaBaDiJuKo21,AbKe17,BaTi18,DeBj11}, abstract
interpretation~\cite{WaZhXuLiJaHsKo21,LyKoKoWoLiDa20,GaGePuVe19,GeMiDrTsCHVe18,GoPlSeRuSa21},
LP solving~\cite{TjXiTe17}, and combinations thereof.  Many such
verifiers are available~\cite{BrMuBaJoLi23,BrBaJoWu24}, and some tools
have even been applied to industrial case
studies~\cite{ElElIsDuGaPoBoCoKa24,AmFrKaMaRe23,KoLeEdChMaLo23}. Given
a DNN and a property over its input and outputs, DNN verifiers will
typically attempt to either find an input to the network that
violates the property or conclude that so such input exists.

Despite this success, one significant issue that the DNN verification
community faces is related to the \emph{reliability} of verifiers. One
concern is that even state-of-the-art verifiers may contain
implementation bugs. A second concern is that verifiers typically use
floating-point arithmetic, which is crucial for scalability but could lead to rounding errors and
numerical instability issues. These issues might be exploited to
compromise a verifier's soundness~\cite{JiRi21, ZoBaCsIsJe21}, thus
undermining user trust in these tools.
 
Due to their complexity, verifying the correctness of DNN verifiers
themselves is typically infeasible. Instead, DNN verifiers can
produce \emph{proofs} for their verification results,
as is commonly done in SAT~\cite{HeBi15} and SMT~\cite{BaDeFo15}.
These proofs, which serve as witnesses of correctness of the
verification result, constitute a mathematical object and can be
checked by an external, trusted~\emph{proof checker}~\cite{
  DeIsKoStPaKa25, DeIsPaStKoKa23}.  When a violation of the property
is discovered, the returned counter-example may serve as a proof of
correctness. However, proving the absence of a violation is less
straightforward, due to the NP-hardness of the
problem~\cite{KaBaDiJuKo21, SaLa21,IsZoBaKa23}.

So far, only a handful of
attempts have been made to enable DNN verifiers to produce
proofs~\cite{IsBaZhKa22}. These approaches typically include eager bookkeeping
of many intermediate lemmas deduced during
verification~\cite{IsBaZhKa22}. Therefore, the resulting proofs tend
to be extremely large, increasing the memory consumption of the
verifier and harming the performance of proof
checkers~\cite{DeIsKoStPaKa25}. These problems limit the applicability
of proof-producing verifiers, and call for further analysis to
minimize proof sizes.

In this work, we address this problem by minimizing the proof objects
constructed using the framework of Isac et al.~\cite{IsBaZhKa22}. Our proof reduction method involves two main steps. First, we apply a \emph{dependency analysis} procedure that recursively examines the dependencies of each lemma on previously learned lemmas. This helps to identify and remove lemmas that were eagerly learned and stored during the verification process, but are not actually required in the final proof.
Second, we apply two alternatives for \emph{dependency minimization} procedures, which
further refine the proof by removing unnecessary lemma dependencies,
keeping only a minimal subset of dependencies for each lemma. This
results in additional lemmas being classified as unused, leading to a
smaller proof. 

After the termination of the dependency analysis and minimization, the unused lemmas can be safely removed from the proof, without needing to be checked. Furthermore, our method allows lemma deletion on-the-fly, thus avoiding accumulation of unused lemmas and minimizing the memory consumption during the verification process. This, in turn, increases the scalability of proof producing DNN verifiers.

We evaluated our method across multiple benchmarks,
using~\cite{IsBaZhKa22} as a baseline. We measured the
effectiveness of our algorithms along two axes: the size of the
resulting proof, and the time
overhead for applying our algorithms. The results  for our best-performing algorithm indicate reduction
of proof size of 37\%--82\% on average and of proof
checking time by 30\%--88\% on average. The reduction incurs a
overhead of  7\%--20\%  to the verifier's runtime, where for some benchmarks the average verification time was slightly improved, possibly due to the use of more efficient data structures.

Our contributions are summarized as follows:
\begin{inparaenum}[(i)]
	\item an algorithm for analyzing the dependencies of each lemma learned within proof objects, removing unused lemmas;
	\item two algorithms for further detecting a minimal
          subset of dependencies; and
	\item a demonstration of a substantial reduction of the proof size over several benchmarks, while  adding a reasonable overhead in performance speed.
\end{inparaenum} 

The rest of this paper is organized as follows:
In~\Cref{sec:Background} we provide the necessary background on DNNs
and their verification, and on proof production. In~\Cref{sec:contrib} we explain our method to reduce the proof size. \Cref{sec:Evaluation} is dedicated to evaluation of our method over several benchmarks, with respect to both proof size and verification speed.  In~\Cref{sec:relwork} we discuss related work, and lastly, we conclude  and outline ideas for future research in~\Cref{sec:conclusion}.

\section{Background}
\label{sec:Background}

\subsection{Deep Neural Networks and their Verification}
\mysubsection{Deep Neural Networks (DNNs)~\cite{GoBeCu16}.} 
Formally, a DNN $ \nn:\rn{n}\rightarrow\rn{m} $ is a sequence
of $ k $ layers $ L_0,...,L_{k-1} $ where each layer $ L_i $ consists of
$ s_i \in \mathbb{N} $ nodes $ v^1_i,...,v^{s_i}_i $ and $s_i$ biases
$b^j_i\in\mathbb{Q}$. Each directed edge in the DNN is of the form
$(v^l_{i-1},v^j_i)$ and is labeled with a weight $w_{i,j,l}\in\mathbb{Q}$.
The assignment to the nodes in the input layer is defined by $v^j_0 = x_j$, where $\overline{x}\in\rn{n}$ is the input vector. The assignment for the
$ j^{th} $ node in the $ 1 \leq i < k-1 $ layer is computed as
$v^j_i = f_i \left( \underset{l=1} { \overset {s_{i-1}} { \sum } }
w_{i,j,l} \cdot v^l_{i-1} + b^j_i \right)$
for some activation function
$f_i:\mathbb{R}\rightarrow\mathbb{R}$.
Neurons in the output layer are computed similarly, where $f_{k-1}$ is the identity function.

One of the most common activation functions is the
\textit{rectified linear unit} (\relu), defined as $\relu(x) \coloneq
\max(x,0)$. 
The function has two linear phases: if the input is non-negative, the functions is \emph{active}; otherwise,
the function is \emph{inactive}. 
For simplicity, we focus on DNNs with
\relu{} activations, though our work can be extended to support any piecewise-linear activation (e.g., \emph{maxpool}).

\begin{Example} A simple DNN with four layers appears in~\Cref{fig:toyDnn}. For
simplicity, all biases are set to $0$ and are omitted. For input $\langle 2,1\rangle$, the node in the second layer
evaluates to
$ \relu(2 \cdot 1 \; + \; 1 \cdot (-1) ) = \relu(1) = 1 $; the nodes
in the third layer evaluate to $ \relu(1 \cdot (-2)) = 0 $ and $
\relu(1 \cdot 1) = 1 $; and the node in the output layer evaluates to $ 0 + 1 \cdot 2 = 2$.
\end{Example}
\begin{figure}
		\vspace{-1.0cm}
	\begin{center}
		\scalebox{0.82}{
			\def\layersep{2cm}
			\begin{tikzpicture}[shorten >=1pt,->,draw=black!50, node distance=\layersep,font=\footnotesize]
				
				\node[input neuron] (I-1) at (0,-1) {$x_1$};
				\node[input neuron] (I-2) at (0,-3) {$x_2$};
				
				\node[hidden neuron] (H-1) at (\layersep, -2) {$v_1$};
				
				\node[hidden neuron] (H-2) at (2*\layersep,-1) {$v_2$};
				\node[hidden neuron] (H-3) at (2*\layersep,-3) {$v_3$};
				
				\node[output neuron] (O-1) at (3*\layersep,-2) {$y$};
				
				\draw[nnedge] (I-1) --node[above,pos=0.4] {$1$} (H-1);
				\draw[nnedge] (I-2) --node[below,pos=0.4] {$-1$} (H-1);

				\draw[nnedge] (H-1) --node[above] {$-2$} (H-2);
				\draw[nnedge] (H-1) --node[below] {$1$} (H-3);
				\draw[nnedge] (H-2) --node[above] {$1$} (O-1);
				\draw[nnedge] (H-3) --node[below] {$2$} (O-1);
				
				\node[above=0.05cm of H-1] (b1) {$\relu{}$};
				\node[above=0.05cm of H-2] (b1) {$\relu{}$};
				\node[above=0.05cm of H-3] (b1) {$\relu{}$};
				
			\end{tikzpicture}
		}
	\end{center}
		\vspace{-0.4cm}
	\caption{A toy DNN with \relu{} activations.}
	\label{fig:toyDnn}
	\vspace{-0.6cm}
\end{figure}
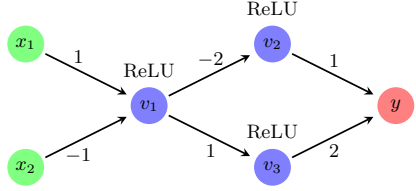

\mysubsection{The DNN Verification Problem.}
Consider a DNN $ \nn:\rn{n}\rightarrow\rn{m}$ where
$\overline{x}\in\rn{n}, \overline{y}\in\rn{m}$ denote the network's
inputs and outputs. The \emph{DNN verification problem} is the problem
of deciding whether there exist
$(\overline{x},\overline{y})\in\rn{n+m}$  such that
$ (\nn(\overline{x}) = \overline{y}) \wedge \varphi_I(\overline{x})\wedge
\varphi_O(\overline{y})$ for some
property $\varphi_I\wedge\varphi_O$.
If such
$\overline{x},\overline{y}$ exist, we say the problem is
satisfiable, denoted \sat{}. Otherwise, we say that it is unsatisfiable, denoted \unsat.
In this work, we consider \emph{quantifier-free linear properties}, i.e., formulas without quantifiers whose atoms are linear equations and inequalities.

\begin{Example}\label{ex:property} Consider the DNN
  from~\Cref{fig:toyDnn}, the input property
  
  \[
    \varphi_I(\overline{x}) \coloneq (1 \leq x_1) \wedge (1 \leq x_2)
    \wedge (x_1 \leq 2) \wedge (x_2 \leq 2)
  \]
  and the output property
  \[
    \varphi_O(y) \coloneq (y \leq -1).
  \]
  This instance is \unsat, as the output is a sum of two non-negative components with positive weights.
  	\vspace{-0.15cm}
\end{Example}

\mysubsection{Linear Programming (LP)~\cite{Da63} and DNN Verification.} In LP we seek an
assignment to a set of variables that satisfies a set of linear
constraints, while maximizing a given objective function.
As is common in the context of DNN verification, we assume here that the
objective function is constant and ignore it.
Formally,
let
$V = \left[x_1,\ldots,x_n\right]^\intercal\in \rn{n}$ denote 
variables, $ A \in M_{m \times n}( \mathbb{R})$ denote a constraint matrix
and $l, u \in (\mathbb{R}\cup \lbrace \pm \infty \rbrace)^{n} $
denote vectors of bounds. The LP problem is to decide the satisfiability of
$(A\cdot V = 0) \wedge (l\leq V \leq u)$.  Throughout the paper, we
use $l(x_i)$ and $u(x_i)$, to refer to the lower and upper bounds
(respectively) of $x_i\in V$, and denote $l \leq u$ to indicate that for each element $i$, $l(x_i) \leq u(x_i)$.

A DNN verification query can be encoded as a tuple
$Q=\langle V,A,l,u,R\rangle$, where $\langle V,A,l,u \rangle$
constitutes a linear program~\cite{Da63,KaBaDiJuKo21} that represents
the property and the affine transformations within $\nn$, and where
$R$ is the set of \relu{} activation constraints of the form
$f_i=\relu(b_i)$, for variables $b_i,f_i\in V$. In addition, a fresh
auxiliary variable $a_i$ is added for each \relu{} constraint,
representing the non-negative difference of its output and input. The
variable is added to $V$, with an equation $a_i = f_i -
b_i$ added to $A$, and the bounds $l(a_i)= 0$, $u(a_i) = u(f_i) - l(b_i)$ added to $l,u$ respectively.

\begin{Example} We
	demonstrate this encoding with the example presented
	in~\Cref{fig:toyDnn} and the property from~\Cref{ex:property}. Let
	$x_1, x_2$ denote the network's inputs and $y$ denote its output,
	and let $b_i,f_i$ ( $i\in\{1,2,3\}$) denote the input and
	output of neuron $v_i$, respectively.  The affine transformations
	of the network are reduced to the equations:
	\vspace{-0.5cm}
	\begin{align*}
		x_1 - x_2 -b_1 &= 0\\
		2f_1 + b_2 &= 0\\
		f_1 - b_3 &= 0\\
		f_2 + 2f_3 - y &= 0
	\end{align*}
	accompanied by $f_i - b_i - a_i = 0$ for $i\in\{1,2,3\}$. The property
	is encoded using the inequalities:
	$
	(1 \leq x_1), (1 \leq x_2),
	(x_1 \leq 2), (x_2 \leq 2), (y \leq -1).
	$
	We also add the
	inequalities $(0 \leq f_i)$ (for $i\in\{1,2,3\}$) to express the
	non-negativity of \relu{}s, and $(0 \leq a_i)$ as described. The aforementioned constraints form the LP
	part of the query; and its piecewise-linear portion is comprised of
	the constraints
	$f_i = \relu(b_i)$ for $i\in\{1,2,3\}$.
\end{Example}

Using this
encoding, the verification query can be dispatched through a series of
invocations of an LP solver, combined with \emph{case splitting} to
handle the piecewise-linear constraints~\cite{KaBaDiJuKo21}.
Schematically, the DNN verifier begins by invoking the LP solver over
the linear part of the query. An \unsat{} result of the LP solver
implies the overall query is \unsat. Otherwise, the LP solver finds an
assignment that satisfies the linear part of the query. If this assignment
satisfies the piecewise-linear part of the query, the DNN verifier may
conclude \sat. If neither case applies, the DNN verifier splits the
query into two subqueries, by deciding the phase of a single \relu{}
constraint $f_i=\relu{}(b_i)$. One subquery is augmented with the bounds $b_i \geq 0$ and $a_i \leq 0$
corresponding to the active phase; the other is augmented with $b_i \leq 0$ and $f_i \leq0$, corresponding to the inactive
phase. Then, the LP solver is invoked again over each subquery, and
the process is repeated. Note that the introduction of auxiliary
variables is intended to reduce case splitting to bound updates,
without needing to add new equations; empirically, this is known to
improve performance~\cite{WuIsZeTaDaKoReAmJuBaHuLaWuZhKoKaBa24}.

This splitting approach creates an abstract tree structure, where each node represents a case split. 
If the LP solver deduces \unsat{} for each leaf of the tree, then the original query is \unsat{} as well. If not, then due to the completeness of LP solvers, it finds a satisfying assignment for one of the subqueries.

The search tree may be exponentially large in the number of 
piecewise linear constraints (i.e., the number of neurons). Therefore, it is common to use case splitting rarely, and apply algorithms to deduce tighter bounds of
variables. These algorithms may conclude in advance that some neurons'
phases are fixed, i.e., they are always active or  inactive,
avoiding the need to split on them~\cite{KaBaDiJuKo21}.


\subsection{Proof Production for DNN Verification}
When DNN verification is regarded as a satisfiability problem, a
satisfying assignment is an efficient proof of \sat: one can simply
run it through the network and observe that it satisfies the given property. The \unsat{} case is more complex, and requires producing
a \emph{proof certificate}. The first method
for proof production in DNN verification performed via LP solving and
case splitting was introduced in~\cite{IsBaZhKa22}. These proof
certificates consist
of a \emph{proof tree} that replicates the search tree created by the
verifier, and is constructed during the verification process. Recall
that for an \unsat{} query, all leaves of the search tree represent
\unsat{} subqueries. Thus, each internal node of the proof tree
represents a case split performed by the verifier, and each leaf
corresponds to a subquery for which the verifier has deduced \unsat. Therefore a \emph{proof checker} for these proofs is required to traverse the tree, check its structural correctness (i.e., all splits are correct), and certify the unsatisfiability of each leaf.

In each leaf of the search tree, the verifier was able to conclude
\unsat{} using solely the available linear constraints.
Thus, a proof for the leaf's unsatisfiability can be described as a vector, according
to the Farkas lemma~\cite{ChvatalLP}, which combines the linear
constraints to derive an evident, easily-checkable, contradiction~\cite{IsBaZhKa22}.
More formally:

\begin{theorem}[Farkas Lemma Variant (From~\cite{IsBaZhKa22})]
	\label{thm:Farkas}
	Observe the constraints $ A\cdot V = \bar{0} $ and $ l \leq V \leq u $,	where $ A \in M_{m \times n} (\mathbb{R})$  and $ l,V,u \in \rn{n} $. The constraints are \unsat{} if and only if $ \exists w \in \rn{m} $ such that for 
	$w^\intercal \cdot A \cdot V \coloneq \underset{i=1}{\overset{n}{\sum}}c_i\cdot x_i$, we have that  $ \underset{c_i > 0}{\sum}c_i\cdot u(x_i) +  \underset{c_i < 0}{\sum}c_i\cdot l(x_i) < 0$
	whereas $  w^\intercal \cdot \bar{0} = 0 $. Thus,
	$w$ is a proof of the constraints' unsatisfiability, and can be constructed during LP solving.
\end{theorem}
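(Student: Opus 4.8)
The plan is to read \Cref{thm:Farkas} as the standard duality between infeasibility of a \emph{bounded} linear system and the existence of a Farkas certificate, and to prove the two directions separately. Throughout I would assume the box $B \coloneqq \{V \in \rn{n} : l \le V \le u\}$ is nonempty, i.e.\ $l \le u$ coordinatewise; the degenerate case is trivial and can be set aside.

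For the ``if'' direction ($\Leftarrow$), suppose such a $w$ is given and write $c^\intercal \coloneqq w^\intercal A$, so that $w^\intercal A V = \sum_{i=1}^n c_i x_i$ for every $V \in \rn{n}$. I would argue by contradiction: if some $V$ satisfied $A V = \bar 0$ together with $l \le V \le u$, then on one hand $w^\intercal A V = w^\intercal \bar 0 = 0$, while on the other hand, using $c_i x_i \le c_i u(x_i)$ when $c_i > 0$, using $c_i x_i \le c_i l(x_i)$ when $c_i < 0$ (the inequality reversing precisely because $c_i < 0$), and $c_i x_i = 0$ when $c_i = 0$, we would get $0 = \sum_i c_i x_i \le \sum_{c_i > 0} c_i u(x_i) + \sum_{c_i < 0} c_i l(x_i) < 0$, a contradiction. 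Hence the constraints are \unsat.

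For the converse ($\Rightarrow$), the idea is a separating-hyperplane argument, which is essentially the geometric form of the Farkas lemma. Consider $S \coloneqq \{A V : V \in B\} \subseteq \rn{m}$. Since $B$ is a nonempty compact convex box and $A$ is linear (hence continuous), $S$ is a nonempty, convex, and compact set, and unsatisfiability of the constraints says exactly that $\bar 0 \notin S$. By the strict separating hyperplane theorem for a point and a disjoint closed convex set, there exists $w \in \rn{m}$ with $\sup_{z \in S} w^\intercal z < w^\intercal \bar 0 = 0$ (negating $w$ if necessary to orient the halfspace correctly), i.e.\ $\sup_{V \in B} w^\intercal A V < 0$. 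Writing again $c^\intercal = w^\intercal A$, the supremum of the linear functional $V \mapsto c^\intercal V$ over the box $B$ is attained coordinatewise and equals $\sum_{c_i > 0} c_i u(x_i) + \sum_{c_i < 0} c_i l(x_i)$; therefore this quantity is negative, which is exactly what the statement requires, and $w^\intercal \bar 0 = 0$ holds trivially.

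The only real subtlety is the appeal to \emph{strict} separation, which requires $S$ to be closed: this is where compactness of $B$ (finiteness of the bounds $l,u$) together with continuity of $A$ is used; for an unbounded feasible region one would in general obtain only non-strict separation, which would not yield the strict inequality claimed. Finally, concerning the last sentence of the statement: this $w$ is not merely an existence object — when a simplex-based LP solver reports infeasibility it already computes such a vector as the associated dual (Farkas) ray, so the certificate is obtained as a by-product of LP solving at no additional asymptotic cost, and a proof checker need only re-verify the single scalar inequality above.
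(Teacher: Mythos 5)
The paper does not actually prove \Cref{thm:Farkas}; it is imported verbatim from the prior work~\cite{IsBaZhKa22} as background, so there is no in-paper proof to compare against. On its own merits, your argument is correct and is the standard way to establish this variant: the ``if'' direction by evaluating $w^\intercal A V$ on a hypothetical feasible point and bounding each term $c_i x_i$ by $c_i u(x_i)$ or $c_i l(x_i)$ according to the sign of $c_i$, and the ``only if'' direction by strictly separating $\bar{0}$ from the compact convex set $S = \{AV : l \le V \le u\}$ and then observing that the supremum of $V \mapsto c^\intercal V$ over the box is exactly $\sum_{c_i>0} c_i u(x_i) + \sum_{c_i<0} c_i l(x_i)$. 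You are also right to flag that compactness of the box (finiteness of $l,u$, as the statement's $l,u \in \rn{n}$ guarantees) is what buys \emph{strict} separation. One small caveat: the degenerate case $l \not\le u$ is not merely ``trivial and set aside'' --- there the equivalence can genuinely fail (take $A$ to be the zero matrix with $l(x_1) > u(x_1)$: the constraints are \unsat{} but every $w$ yields $c = \bar{0}$ and an empty, hence zero, sum), so the nonemptiness of the box is a genuine hypothesis rather than a dispensable convenience; in the verifier's setting this is harmless because a crossed bound pair is itself an immediate, separately-checkable refutation. Your closing remark that $w$ is produced as a by-product of LP solving matches the paper's own framing of the certificate.
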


\begin{Example}\label{ex:query} We demonstrate this theorem with our
example. Consider the query based on the DNN in~\Cref{fig:toyDnn} and
the property in~\Cref{ex:property}. Assume, for simplicity, that all
lower and upper bounds not explicitly stated are $-1$ 
or $1$, respectively. Further assume that both $v_2,v_3$ are inactive (i.e. $f_2=f_3=0$ and $b_2,b_3\leq 0$).
The linear part of the query yields the LP instance:
\[
\scalebox{0.92}{$
A =
\begin{bmatrix}
	1 & -1 & -1 & 0  & 0  & 0  & 0  & 0 & 0 & 0 & 0 & 0\\
	0 &  0 &  0 & 1 & 0 & 2 &  0 & 0 & 0 & 0 & 0 & 0 \\
	0 &  0 &  0 &  0 & -1  & 1  & 0& 0 & 0 & 0 & 0 & 0\\
	0 &  0 & 0 & 0  & 0 & 0  &  1 & 2 & 0 & 0 & 0 &-1 \\
	0 &  0 &  -1 &  0 & 0  & 1  & 0& 0 & -1 & 0 & 0 & 0\\
	0 &  0 &  0 &  -1 & 0  & 0  & 1& 0 & 0 & -1 & 0 & 0\\
	0 &  0 &  0 &  0 & -1  & 0 & 0& 1 & 0 & 0 & -1 & 0\\
\end{bmatrix} \quad
\begin{aligned}
	u =&
	\begin{bmatrix}
		2 & \;\;2 & \;\;\;1  & \;0 & \;\;\;0  &\; 1 & \;\;0 &\; 0 &\;\; 2 & \;\;1 & 1& -1
	\end{bmatrix}^\intercal\\
	V = &
	\begin{bmatrix}
		x_1 & x_2 & \; b_1 & b_2 & \; b_3 & f_1 & f_2 & f_3 & a_1 & a_2 & a_3 & y
	\end{bmatrix}^\intercal \\
	l =&
	\begin{bmatrix}
		1 & \;\;1 & \, -1  & -1  & -1  & 0 & \;0 & \;0 &\;\; 0 &\;\; 0 & 0 & -1
	\end{bmatrix}^\intercal 
\end{aligned}
$}
\]
Note that in $A$ the first four rows corresponds to the affine transformations of the DNN, and the latter three rows are introduced with the auxiliary variables.
Consider the vector
\[
  w = 	\begin{bmatrix}
	0 & 0 & -1  & -2 & 0 & 0 & 0
  \end{bmatrix}^\intercal.
\]
The product $w^\intercal\cdot A\cdot V$ yields the equation
$
  -f_1+b_3-2f_2-4f_3+2y = 0.
$
The left hand side of the equation is at most:
$
  -l(f_1)+u(b_3)-2\cdot l(f_2)-4\cdot l(f_3)+2\cdot u(y) = -2 < 0.
$
 Therefore, according to~\Cref{thm:Farkas}, $w$ is a proof of \unsat. Note that there may be multiple proofs of this result.
An overall proof of \unsat{} for the query consists of a proof tree, where each leaf has a proof vector proving \unsat{} for the corresponding subquery. 
\end{Example}

\mysubsection{Bound Tightening Lemmas.}  DNN verifiers often employ bound
tightening procedures that deduce bounds using the non-linear
activations, called \emph{bound tightening lemmas}.
Each
bound tightening lemma consists of a ground bound and a learned bound,
which replaces a bound currently in $l$ or $u$.
As opposed to deduction of bounds using linear equations, proving such lemmas generally cannot be captured by the Farkas
vector described above, and require separate proofs~\cite{IsBaZhKa22}. For example, given $f = \relu(b)$
and present bounds $f\leq 5$ and $b\leq 7$, it is possible to deduce
that $b\leq 5$. The lemma can be proven using a Farkas vector
that proves $f\leq 5$, and another proof rule that
captures this form of bound derivation.

\begin{Example}\label{ex:lemma} Consider the query $Q$ in~\Cref{ex:query}. We
can use the equation $b_2 = -2f_1$, which is equivalent to
$2f_1+b_2=0$, to deduce that $u(b_2) = -2\cdot l(f_1) = 0$. Then,
based on the \relu{} constraint, we deduce that the neuron $v_2$ is
inactive, i.e., that $u(f_2) = 0$. A lemma with a ground bound
$u(b_2) = 0$, learned bound $u(f_2) = 0$ and a proof vector for
generating $b_2 = -2f_1$ (the vector
$\begin{bmatrix} 0 & -1 & 0 & 0 & 0 & 0 & 0
\end{bmatrix}^\intercal$) is learned at the root of the proof tree, without any case splits. 
Similarly, we can deduce that $v_3$ is inactive. Observe that $f_3 = \frac{1}{2}y - \frac{1}{2}f_2$, and thus $u(f_3)= \frac{1}{2}u(y) - \frac{1}{2}l(f_2) = -0.5$.
Then a  lemma
with a ground bound $u(f_3) = -0.5$, learned bound $u(b_3) = 0$ and a proof vector $\begin{bmatrix} 0 & 0 & 0 & -0.5 & 0 & 0 & 0
\end{bmatrix}^\intercal$ is learned. Combined with~\Cref{ex:query}, we get a proof of \unsat{} for $Q$ that consists of a single node with two lemmas, as illustrated in~\Cref{fig:proof}.
\end{Example}
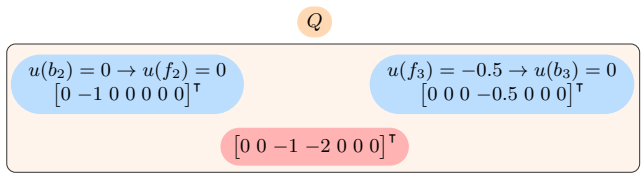
\begin{figure}[h!]
	\vspace{-0.8cm}
	\centering
	\scalebox{0.83} {
		\def\ySep{1cm}
		\def\xSep{3cm}
		\begin{tikzpicture}[ >=stealth,shorten >=1pt,shorten <=1pt]
			\node[proofNode,fill =	orange!30] (root) at (0,0) []  {$Q$};
			\node[lemmaNode, align=center] (lemma1) at (-\xSep, -\ySep)  [] {$u(b_2) = 0 \rightarrow u(f_2) = 0$ \\ $\begin{bmatrix} 0 & -1 & 0 & 0 & 0 & 0 & 0 \end{bmatrix}^\intercal$};
			\node[lemmaNode,align=center] (lemma2) at (\xSep, -\ySep)  [] {$u(f_3) = -0.5 \rightarrow u(b_3) = 0$ \\ $\begin{bmatrix} 0 & 0 & 0 & -0.5 & 0 & 0 & 0 \end{bmatrix}^\intercal$};
			\node[proofNode, align=center] (unsat) at (0, -2*\ySep)  [] {$\begin{bmatrix} 0 & 0 & -1 & -2 & 0 & 0 & 0 \end{bmatrix}^\intercal$};
			
			\begin{pgfonlayer}{background}
				
				\draw[rounded corners, fill =
				orange!10, opacity=0.8]
				($(root.south west)
				+ (-4.9cm, -0.1cm ) $)
				rectangle 
				($(unsat.south east) + ( 4cm, -0.1cm) $);
				
			\end{pgfonlayer}
			
		\end{tikzpicture}
	}
	\caption{A proof tree example with a single node (orange). Given the query $Q$, two lemmas (blue) are learned before deriving \unsat{} (red).}
	\label{fig:proof}
	\vspace{-0.6cm}
\end{figure}

To avoid ambiguity in the definition of the bound vectors $l,u$, unless stated otherwise, $l,u$ denote the tightmost bound vectors, while $l_{input}, u_{input}$ denote the original bound vectors, given as inputs.
                             
\section{Proof Dependency Analysis and Minimization}
\label{sec:contrib}
In this section, we elaborate on our algorithms for proof
minimization. Recall that each lemma and each leaf within the proof
includes a proof vector; and that all these vectors have the same dimension, since the number of
rows in $A$ is constant. As proving \unsat{} of all leaves in the
proof tree is necessary for proving \unsat{} for the whole query,
proof minimization is achieved by reducing the number of lemmas used
within the proof. The minimization algorithms are applied upon deducing \unsat{} on each proof tree leaf, thus enabling on-the-fly minimization and preventing the accumulation of unnecessary lemmas.
 We begin by describing our algorithm for dependency
analysis in~\Cref{sec:proofanalysis} and two algorithms for dependency minimization
in~\Cref{sec:proofmin}. We conclude in~\Cref{sec:misc}, by introducing an additional minimization technique, which can be used in both algorithms, by explaining  the procedure
for removing the unused lemmas and by introducing engineering optimizations to improve performance speed.

\subsection{Analyzing Proof Dependencies}
\label{sec:proofanalysis}
We now describe our algorithm for minimizing proof-trees by dependency analysis. Given a proof vector --- either a proof of unsatisfiability for a search-tree leaf, or a
proof for a bound tightening lemma --- our analysis identifies which bound tightening lemmas are involved in its proof checking process.
Our approach for analyzing the dependencies of a given proof vector
$w$ appears as~\Cref{alg:lemmadeduction}.  

Conceptually,
\Cref{alg:lemmadeduction} reconstructs the linear combination
$w^\intercal\cdot A \cdot V$, denoted as
$\underset{i=1}{\overset{n}{\sum}}c_i\cdot x_i$. Then, as the proof
only uses the bounds $l(x_i)$ for variables with a negative
coefficients, and $u(x_i)$ for variables with a positive coefficient,
their corresponding lemmas form the \emph{dependency list} for
$w$. The dependency list provides a sufficient set of lemmas for
proving the unsatisfiability of $w$ --- and this set can potentially
be minimized.  Then, the algorithm further analyses the dependencies
of the discovered lemmas, only after we know they are used in the
proof. Recall that each lemma has its own proof vector, and thus this
analysis is carried out through a recursive call on each lemma's proof.

To expedite the algorithm, we store for every bound
$l(x_i),u(x_i)$, the lemma or split that was used to deduce it. In addition, to avoid cycles and ensure the completeness of our algorithms, each
lemma is given a unique, chronological ID. Whenever a lemma with ID $k$ is analyzed, only lemmas with
ID $l < k$ are considered in the analysis. 

\begin{algorithm}[t!]
	\textbf{Input:} A DNN verification subquery $Q=\langle V,A,l,u,R\rangle$, a list of learned lemmas $Lem$, and a proof of \unsat{} $w$ \\
	\textbf{Output:} A list of lemmas sufficient to deduce \unsat 
	\caption{\emph{proofDeps():} Construct the dependency list of a proof vector}
	\label{alg:lemmadeduction}
	\begin{algorithmic}[1]
		\Statex{// Let $\underset{i=1}{\overset{n}{\sum}}c_i\cdot
			x_i$ denote the linear combination $ w^\intercal \cdot A \cdot V$}
		\State {$Deps \leftarrow \emptyset$} \Comment{An empty list of lemmas}
		\For{$i\in [n] $}
		\If {$c_i > 0$ and $u(x_i)$ is learned by  $\ell^u_i\in Lem$}
		\State{$Deps \leftarrow Deps \cup \ell^u_i$}
		\ElsIf {$c_i < 0$ and $l(x_i)$ is learned by  $\ell^l_i\in Lem$}
		\State{$Deps \leftarrow Deps \cup \ell^l_i$}
		\EndIf
		\EndFor
		\For{$lem \in Deps$}
		\Comment{Repeat recursively}
		\State{$lem.includeInProof \leftarrow true$ }
		\State{$l',u' \leftarrow$ bounds with ID at most $lem.getID()$}
		\State{$w' \leftarrow lem.getProof()$ }
		\State{$Deps' \leftarrow proofDeps(\langle V,A',l',u',R\rangle, Lem, w')$}
		\State{$Deps \leftarrow Deps \cup Deps'$}
		\EndFor\\
		\Return{$Deps$}
	\end{algorithmic}
\end{algorithm}
\begin{Example}\label{ex:deplist} Recall that in~\Cref{ex:query}
  and~\Cref{ex:lemma}, after learning two lemmas we are able to derive \unsat{} from a proof vector $w = 	\begin{bmatrix}
	0 & 0 & -1  & -2 & 0 &0
\end{bmatrix}^\intercal$. Using this vector generates the equation $-f_1+b_3-2f_2-4f_3+2y = 0$, which uses the bounds $l(f_1),u(b_3),l(f_2),l(f_3),u(y)$ to prove \unsat.
As $u(b_3)$ is deduced using one of the lemmas in~\Cref{ex:lemma},
this lemma forms the dependency list of $w$. Then, the algorithm
recursively generates the dependency list of the proof vector of the
lemma in a similar manner. As shown in~\Cref{ex:lemma}, that lemma is
not dependent on $u(f_2)$. Thus, the lemma for deducing $u(f_2)$ is
not actually used in the proof of \unsat{}, and can then be removed from the proof
tree. An illustration of this example appears in~\Cref{fig:proofanalysis}. 
\end{Example}
\begin{figure}[h!]
	\vspace{-0.8cm}
	\centering
	\scalebox{0.83} {
		\def\ySep{1cm}
		\def\xSep{3cm}
		\begin{tikzpicture}[ >=stealth,shorten >=1pt,shorten <=1pt]
			\node[proofNode,fill =	orange!30] (root) at (0,0) []  {$Q$};
			\node[lemmaNodeRemoved, align=center] (lemma1) at (-\xSep, -\ySep)  [] {$u(b_2) = 0 \rightarrow u(f_2) = 0$ \\ $\begin{bmatrix} 0 & -1 & 0 & 0 & 0 & 0 & 0 \end{bmatrix}^\intercal$};
			\node[lemmaNode,align=center] (lemma2) at (\xSep, -\ySep)  [] {$u(f_3) = -0.5 \rightarrow u(b_3) = 0$ \\ $\begin{bmatrix} 0 & 0 & 0 & -0.5 & 0 & 0 & 0 \end{bmatrix}^\intercal$};
			\node[proofNode, align=center] (unsat) at (0, -2*\ySep)  [] {$\begin{bmatrix} 0 & 0 & -1 & -2 & 0 & 0 & 0 \end{bmatrix}^\intercal$};
			
			\draw[dashed,->, line width=0.5mm] (root) -- (lemma1) ;
			\draw[dashed,->, line width=0.5mm] (root) -- (lemma2) ;
			\draw[dashed,->, line width=0.5mm] (root) -- (unsat) ;
			
			\draw[dashed,->, line width=0.5mm] (\xSep, -1.45*\ySep) -- (0, -1.7*\ySep) ;

			\begin{pgfonlayer}{background}
				
				\draw[rounded corners, fill =
				orange!10, opacity=0.8]
				($(root.south west)
				+ (-4.9cm, -0.1cm ) $)
				rectangle 
				($(unsat.south east) + ( 4cm, -0.1cm) $);
				
			\end{pgfonlayer}
			
		\end{tikzpicture}
	}
	\caption{Using~\Cref{alg:lemmadeduction} over the proof vector used to derive \unsat{} (red) shows that a single lemma (blue) is used during the proof process, and the other (gray) is not. All proof vectors use information in the root query $Q$ (orange)}
	\label{fig:proofanalysis}
	\vspace{-0.6cm}
\end{figure}
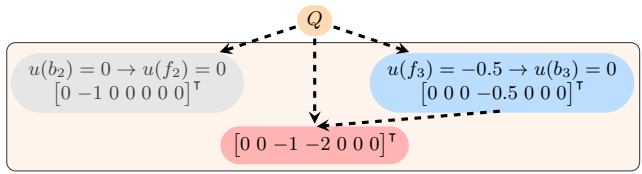

\subsection{Minimizing the Number of Proof Dependencies}
\label{sec:proofmin}
We now elaborate on our two alternatives of algorithms for minimizing the dependency list
of a proof vector. By doing so, we aim to increase the number of lemmas that
 do not appear in any dependency list, i.e., are unused. Such lemmas
 can then  be omitted from the proof, further reducing its overall size. 

 Let $w$ be a proof vector proving the unsatisfiability of a search-tree
 leaf. Our algorithm focuses on minimizing the number of bound
 tightening lemmas that participate in the refutation. Recall that a
 refutation is checked by computing the upper bound of
 $ w^\intercal \cdot A \cdot V =
 \underset{i=1}{\overset{n}{\sum}}c_i\cdot x_i$, i.e.,
\[\Delta \coloneq \underset{c_i > 0}{\sum}c_i\cdot u(x_i) +  \underset{c_i < 	0}{\sum}c_i\cdot l(x_i)
\]
 and ensuring it is negative. The idea of our algorithm lies in the fact that $\Delta$ can be very far from zero. In this case, we can relax some of the bounds derived from the participating bound lemmas, without altering the sign of $\Delta$, and thus maintain the proof's correctness.
 Analyzing the dependencies of a proof vector $\hat{w}$, used to prove a bound tightening lemmas, is performed similarly. In this case, $\Delta$ represents the difference between the bound it is used to prove and the actual bound achieved with $ \hat{w}^\intercal \cdot A \cdot V$.
 
\subsubsection{Proof Minimization.}
The first algorithm for proof minimization is greedy, and removes the maximal number of unnecessary dependencies for each proof vector separately. 
To do so, we first define for each lemma its \emph{contribution to $w$}, which can be seen as its effect on $\Delta$. More formally, given the input bound vectors $l_{input},u_{input}$, and the bound vectors $l,u$ with bounds derived by a set of bound lemmas and case splits. For each $c_j > 0$, the contribution of $u(x_j)$ to $w^\intercal \cdot A \cdot V$ is defined as: 
\[cont(u_j,w)\coloneq c_j\cdot(u(x_j) - u_{input}(x_j))\] For each  $c_j < 0$, the contribution of $l(x_j)$ to $w^\intercal \cdot A \cdot V$ is: \[cont(l_j,w)\coloneq c_j\cdot(l_{input}(x_j) - l(x_j))\]
Note that in all cases the contribution is non-positive, as $l_{input} \leq l \leq u \leq u_{input}$.
	
 These definitions naturally give rise to~\Cref{alg:proofmin}, which
 sorts the various lemmas according to their contributions to the
 proof, from small to large. Those lemmas with only a small
 contribution could potentially be omitted from the proof, without
 altering the sign of $\Delta$. Attempting to remove lemmas in this
 order ensures the removal of the maximum number of lemmas from the dependency list, and
 consequently the minimality of the output, as we prove
 in~\Cref{thm:minimality}. \Cref{alg:proofmin} can be used
 within~\Cref{alg:lemmadeduction}, before any recursive application of
 the algorithm. By that, \Cref{alg:proofmin} reduces the time overhead
 of~\Cref{alg:lemmadeduction}, by reducing the number of the latter's
 recursive calls.

\begin{theorem}
	\label{thm:minimality}
	Given $\langle V,A,l,u,R\rangle$, a proof vector $w$, a list of its lemma dependencies $D$, and the input bound vectors $l_{input},u_{input}$, \Cref{alg:proofmin} returns a dependency subset of $Deps$ that is minimal in size.
\end{theorem}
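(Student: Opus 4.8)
The plan is to reduce the claim to a textbook greedy optimization argument. Fix the leaf and the refutation vector $w$, write $w^\intercal\cdot A\cdot V=\sum_{i=1}^n c_i x_i$, and let
\[
\Delta \;\coloneq\; \sum_{c_i>0} c_i\,u(x_i)\;+\;\sum_{c_i<0} c_i\,l(x_i)\;<\;0
\]
be the quantity the checker evaluates for $w$. By the way the dependency list is built, every $\ell\in Deps$ is the unique lemma recorded as responsible for exactly one relevant bound --- $u(x_j)$ for some $j$ with $c_j>0$, or $l(x_j)$ for some $j$ with $c_j<0$ --- and that bound appears in exactly one summand of $\Delta$. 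The first step is therefore to observe that discarding a subset $S\subseteq Deps$ reverts each affected bound to the corresponding input bound and changes $\Delta$ into
\[
\Delta_S \;=\; \Delta \;-\; \sum_{\ell\in S} cont(\ell,w),
\]
where, crucially, (i) distinct lemmas of $Deps$ act on disjoint summands of $\Delta$, so these corrections are additive and order-independent, and (ii) since $l_{input}\le l\le u\le u_{input}$ every $cont(\ell,w)$ is non-positive, so $\Delta_S$ is non-decreasing in $S$.

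The second step is to record the exact feasibility condition: the reduced list $Deps\setminus S$ still suffices to make the checker accept $w$ iff the check passes on the reverted bounds, i.e.\ iff $\Delta_S<0$, which by the displayed identity is equivalent to
\[
\sum_{\ell\in S} r(\ell) \;<\; \rho, \qquad\text{where } r(\ell)\coloneq -cont(\ell,w)\ge 0 \text{ and } \rho\coloneq -\Delta>0 .
\]
So minimizing $|Deps\setminus S|$ is exactly the problem of choosing a \emph{maximum-cardinality} $S\subseteq Deps$ whose total weight stays strictly below the budget $\rho$. \Cref{alg:proofmin} solves this by sorting $Deps$ so that $r(\ell_1)\le r(\ell_2)\le\cdots$ and greedily deleting lemmas in this order while the running weight is below $\rho$; note the weights $r(\ell)$ are functions of the fixed vectors $l,u,l_{input},u_{input}$ and of $w$ alone, so they are unaffected by earlier deletions, which is what makes the greedy rule well defined.

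The third step is optimality of this greedy prefix. Let $S^\star=\{\ell_1,\dots,\ell_k\}$ be the set the algorithm removes, where $k$ is the largest index with $\sum_{j=1}^k r(\ell_j)<\rho$ (so $S^\star=\varnothing$ if $r(\ell_1)\ge\rho$, and $S^\star=Deps$ if the whole sum is below $\rho$, in which case the output is empty and trivially minimal). If $k<|Deps|$ and some feasible $S'$ had $|S'|\ge k+1$, then, because $\{r(\ell_j)\}$ is sorted, $\sum_{\ell\in S'}r(\ell)\ge\sum_{j=1}^{k+1}r(\ell_j)\ge\rho$ by maximality of $k$, contradicting feasibility of $S'$. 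Hence no feasible set has more than $k$ elements, $S^\star$ is a maximum feasible set, and consequently the returned list $Deps\setminus S^\star$ is of minimum size.

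I expect the genuine work to sit in the first step, not in the greedy argument: one has to justify that deleting lemmas from the direct dependency list of $w$ affects the refutation check in a way that is \emph{exactly} additive and is captured by the static, non-positive numbers $cont(\ell,w)$ --- in particular that no two lemmas in $Deps$ are simultaneously recorded as responsible for the same bound of the same variable, and that bounds coming from case splits (which are never removed) remain fixed and do not interact with the removals. Once that separability is nailed down, the reduction to ``pack as many items as possible under a weight budget'' and the correctness of the ascending-weight greedy are routine.
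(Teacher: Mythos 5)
Your proof is correct and reaches the same conclusion, but by a genuinely different route than the paper. The paper argues by contradiction with an exchange argument: it takes a hypothetical smaller dependency list $D'$, picks $\ell\in D\setminus D'$ and $\ell'\in D'\setminus D$ of extremal contribution, swaps them without increasing the removed contribution, and repeats ``until saturation'' to reduce to the case $D'\subsetneq D$, which contradicts the greedy rule. You instead reformulate the task directly as choosing a maximum-cardinality subset $S\subseteq Deps$ of non-negative weights $r(\ell)=-cont(\ell,w)$ under the budget $\rho=-\Delta$, and prove optimality of the ascending-weight greedy prefix in one step: any feasible $S'$ with $|S'|\ge k+1$ has weight at least the sum of the $k+1$ smallest weights, which exceeds the budget by maximality of $k$. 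This avoids the paper's iterated exchange (and its somewhat informal ``repeat until saturation'' step), and it buys something the paper leaves implicit: you explicitly isolate the separability assumptions that make the reduction valid --- each lemma in $Deps$ is responsible for exactly one bound term of $\Delta$, removals revert to the input bounds and act additively and order-independently, and split-derived bounds are untouched --- which is indeed where the real content lies. Two small remarks: the algorithm as printed tests $\delta\le|\Delta|$, whereas your feasibility condition $\sum_{\ell\in S}r(\ell)<\rho$ is strict (and is the one actually required for the Farkas check $\Delta_S<0$ to survive), so your account is if anything more careful than the pseudocode; and your tacit assumption that reverting a removed lemma's bound lands exactly on $l_{input},u_{input}$ rather than on some retained intermediate bound is shared with the paper's definition of $cont$, so it is a modeling choice rather than a gap in your argument.
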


\begin{proof}
	Assume towards contradiction that there exists $D'$, which constitutes a dependency list of $w$, with $|D'| < |D|$. First, observe that if $D' \subsetneq D$, then there is some lemma $\ell \in D \setminus D'$ that can be removed from $D$ when considered in line $6$, this contradicts the definition of the algorithm.
	 
	Therefore, $D \setminus D'$ and $D' \setminus D$ are not empty. Let $\ell \in D \setminus D'$, with the minimal contribution $\delta$, and let $\ell' \in D' \setminus D$ with the minimal contribution $\delta'$.
	Since~\Cref{alg:proofmin} is greedy, we must have that $\delta' \leq \delta$, as otherwise the algorithm would have removed $\ell$ from $D$ before removing $\ell'$. This means that we can replace $\ell'$ with $\ell$ in $D'$ without increasing the overall contribution removed from $Deps$. We can repeat the process until saturation, i.e., when $D' \subsetneq D$, which leads to a contradiction. $\square$
\end{proof}

\begin{algorithm}[t!]
	\textbf{Input:} A DNN verification subquery  $Q=\langle V,A,l,u,R\rangle$, a proof vector $w$, a list of its lemma dependencies $Deps$, and the input bound vectors $l_{input},u_{input}$\\
	\textbf{Output:} A minimial dependency list of $w$
	\caption{\emph{proofMin()}: Minimize the dependency list of a proof vector}
	\label{alg:proofmin}
	\begin{algorithmic}[1]
		\Statex{// Let $\underset{i=1}{\overset{n}{\sum}}c_i\cdot
			x_i$ denote the linear combination $ w^\intercal \cdot A \cdot V$}
		\State{$\Delta\leftarrow\underset{c_i > 0}{\sum}c_i\cdot u(x_i) +  \underset{c_i < 0}{\sum}c_i\cdot l(x_i)$}
		\State{$Deps \leftarrow Deps.sortByContribution()$}
		\State{$\delta \leftarrow 0$}
		\For{$dep \in Deps$}
		\State{$\delta += |cont(dep.getBoundType(),w)|$} \Comment {Use $A,V,l,u,l_{input},u_{input}$}
		\If{$\delta \leq |\Delta|$}
		\State {$Deps.remove(dep)$}
		\Else{}
		\Return{$Deps$}
		\EndIf
		\EndFor
		\State\Return $\emptyset$
	\end{algorithmic}
\end{algorithm}

\begin{Example}
	In~\Cref{ex:deplist} we saw that the proof vector $w = 	\begin{bmatrix}
		0 & 0 & -1  & -2 & 0 &0
	\end{bmatrix}^\intercal$, presented in~\Cref{ex:query}, is dependent on only one of the two lemmas learned as in~\Cref{ex:lemma}. In addition, the analysis in~\Cref{ex:query} shows that $\Delta = -2$, and that the proof vector $w$ is dependent on the lemma used to deduce $u(b_3)=0$, whose input bound is $u(b_3)=1$. Therefore, the contribution of the lemma is $1\cdot(0-1)=-1$, indicating this lemma can be removed from the dependency list as well.
	Observe that indeed, given the input bounds, we have: 
	\[
	-l(f_1)+u_{input}(b_3)-2\cdot l(f_2)-4\cdot l(f_3)+2\cdot u(y) = -1 < 0
	\] Together with~\Cref{ex:deplist}, we conclude that both lemmas from~\Cref{ex:lemma} can be removed from the proof tree. An illustration for this example appears in~\Cref{fig:proofminanalysis}.
\end{Example}
\begin{figure}[h!]
	\centering
	\scalebox{0.83} {
		\def\ySep{1cm}
		\def\xSep{3cm}
		\begin{tikzpicture}[ >=stealth,shorten >=1pt,shorten <=1pt]
			\node[proofNode,fill =	orange!30] (root) at (0,0) []  {$Q$};
			\node[lemmaNodeRemoved, align=center] (lemma1) at (-\xSep, -\ySep)  [] {$u(b_2) = 0 \rightarrow u(f_2) = 0$ \\ $\begin{bmatrix} 0 & -1 & 0 & 0 & 0 & 0 & 0 \end{bmatrix}^\intercal$};
			\node[lemmaNodeRemoved,align=center] (lemma2) at (\xSep, -\ySep)  [] {$u(f_3) = -0.5 \rightarrow u(b_3) = 0$ \\ $\begin{bmatrix} 0 & 0 & 0 & -0.5 & 0 & 0 & 0 \end{bmatrix}^\intercal$};
			\node[proofNode, align=center] (unsat) at (0, -2*\ySep)  [] {$\begin{bmatrix} 0 & 0 & -1 & -2 & 0 & 0 & 0 \end{bmatrix}^\intercal$};

			\draw[dashed,->, line width=0.5mm] (root) -- (lemma1) ;
			\draw[dashed,->, line width=0.5mm] (root) -- (lemma2) ;
			\draw[dashed,->, line width=0.5mm] (root) -- (unsat) ;
			
			\draw[dashed,->, line width=0.5mm,,color=black!20] (\xSep, -1.45*\ySep) -- (0, -1.7*\ySep) ;
			
			\begin{pgfonlayer}{background}
				
				\draw[rounded corners, fill =
				orange!10, opacity=0.8]
				($(root.south west)
				+ (-4.9cm, -0.1cm ) $)
				rectangle 
				($(unsat.south east) + ( 4cm, -0.1cm) $);
				
			\end{pgfonlayer}
			
		\end{tikzpicture}
	}
	\caption{Using~\Cref{alg:proofmin} over the proof vector used to derive \unsat{} (red) shows that both lemmas (gray) are unnecessary for proving \unsat.}
	\label{fig:proofminanalysis}
	\vspace{-0.6cm}
\end{figure}
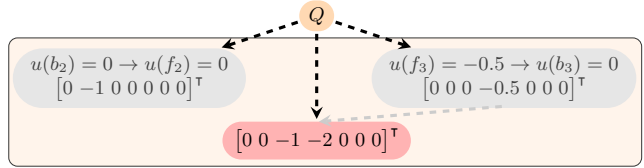

\subsubsection{Global Heuristic for Proof Minimization.}
\label{sec:globmin}
Although~\Cref{alg:proofmin} guarantees achieving the minimum number of dependencies for each proof vector, this does not necessarily ensure achieving the global minimum across the entire proof tree. To incorporate global considerations, we adjust the order in which lemmas are considered for removal, prioritizing those with more dependencies and whose elimination is likely to trigger the removal of additional downstream lemmas.  

Specifically, before applying the minimization algorithm, we recursively apply the dependency analysis procedure to each lemma and compute its total number of dependencies, including those discovered through the recursive applications of the minimization procedure. The lemmas are then sorted in decreasing order by their overall dependency count (i.e., lemmas with more dependencies are considered for removal first). In case of ties, we fall back on the previous contribution score.
We apply the analysis until saturation, meaning no further reduction of dependencies is possible. This ensures minimality of the subset of dependencies, though it does not guarantee the resulting subset is smallest. This can occur if \Cref{alg:globproofmin} diverges significantly from \Cref{alg:proofmin}, for example, when lemmas with high dependency counts contribute heavily. In such cases, the former algorithm may remove these lemmas first, while the latter will instead remove many more lemmas with a lower dependency count. 
To maintain the applicability of the algorithm during proof generation, we apply it each time an \unsat{} leaf is detected, considering all lemmas learned globally up to that point.

This intuition is formalized in~\Cref{alg:globproofmin}, and it opens up possibility to explore additional heuristics for prioritizing lemmas for removal, e.g., by considering different combinations of each lemma's number of dependencies and its contribution.

 Note that introducing global considerations for proof minimization involves dependency analysis for many lemmas, and thus might incur a relatively large time overhead.
 
 \begin{algorithm}[h!]
 	\textbf{Input:}  A DNN verification subquery  $Q=\langle V,A,l,u,R\rangle$, a proof vector $w$, a list of its lemma dependencies $Deps$, the input bound vectors $l_{input},u_{input}$, and the overall list of learned lemmas $Lem$\\
 	\textbf{Output:} A minimial dependency list of $w$
 	\caption{\emph{globProofMin()}: Minimize the dependency list of a proof vector}
 	\label{alg:globproofmin}
 	\begin{algorithmic}[1]
 		\Statex{// Let $\underset{i=1}{\overset{n}{\sum}}c_i\cdot
 			x_i$ denote the linear combination $ w^\intercal \cdot A \cdot V$}
 		\State{$\Delta\leftarrow\underset{c_i > 0}{\sum}c_i\cdot u(x_i) +  \underset{c_i < 0}{\sum}c_i\cdot l(x_i)$}
 		\State{$Deps \leftarrow Deps.sortByDepsLength(Q,Deps,Lem)$}
 		\State{$\delta \leftarrow 0$}
 		\State{$saturation \leftarrow false$}
 		\While {$!saturation$}
 		 \State{$saturation \leftarrow true$}
 		\For{$dep \in Deps$}
 		\State{$\delta += |cont(dep.getBoundType(),w)|$} \Comment{Use $A,V,l,u,l_{input},u_{input}$}
 		\If{$\delta \leq |\Delta|$}
 		\State {$Deps.remove(dep)$}
 		 \State{$saturation \leftarrow false$}
 		\Else\State{$\delta -= |cont(dep.getBoundType(),w)|$} \Comment{Revert}
 		\EndIf
 		\EndFor
 		\EndWhile
 		\State\Return $Deps$
 	\end{algorithmic}

	\vspace{0.2cm}
 	\begin{algorithmic}[1]
 			\Statex sortByDepsLength($Q,Deps,Lem$):
 			\For {$\ell\in Deps$}
 			\If {$!\ell.wasAnalyzed$}
 			\State $\ell.score \leftarrow proofDeps(Q,Lem,\ell.getProof()).size()$
 			\State $\ell.tieBreaker \leftarrow \ell.getContribution()$
 			\State $\ell.wasAnalyzed \leftarrow true$
 			\EndIf
 			\EndFor
				\State\Return Deps.sortByScore()
 	\end{algorithmic}
 \end{algorithm}
 
\subsection{Additional Improvements and Lemma Deletion}
\label{sec:misc}
 \mysubsection{Further Reduction of Proof Size.}
 	Both minimization algorithms presented above consider the
        dependencies for each proof vector separately. However, it is
        possible that minimizing dependencies at the proof-vector
        level leads to redundant dependencies at the proof-tree
        level. Consider the case where two proof vectors $w_1$, $w_2$
        depend on a lemma $\ell$, and that the dependencies of $w_1$ are
        minimized before those of $w_2$. If the dependency
        minimization of $w_1$ indicates that $\ell$ is retained in the
        proof, then removing it from the dependencies of $w_2$ will
        not lead to the removal of $\ell$ from the overall proof, and
        will unnecessarily increase the contributions counter $\delta$
        \Cref{alg:proofmin} (line 5).
 	Therefore, a straightforward improvement to both minimization algorithms is to consider removing dependencies only from lemmas that are not currently used in the proof. This approach allows us to skip lemmas that are already essential and would be retained regardless.
\begin{Example}
	 Consider a lemma $\ell_1$ with $\Delta_1 = -1$. Suppose that
         proof analysis of~\Cref{alg:lemmadeduction} determines that $\ell_1$ is dependent on $\ell_2,
         \ell_3$ with respective contributions of $-0.5$ and
         $-0.6$. Suppose also that other lemmas depend on $\ell_2$, so it is retained in the proof, but
         none depend on $l_3$ yet. In this case,~\Cref{alg:proofmin}
         would first remove the dependency of $\ell_1$ in $\ell_2$ and then
         stop as $|-0.5| + |-0.6| > |-1|$. Therefore, this application of~\Cref{alg:proofmin} will keep
         both $\ell_2$ and $\ell_3$ in the proof. Alternatively, by ignoring
         $\ell_2$ as it is already used in the proof, and removing the
        dependency on $\ell_3$ instead, we may reduce the overall number
        of lemmas required for the proof.
\end{Example}
\mysubsection{Avoiding Duplicate Computations.} Recall that~\Cref{alg:lemmadeduction} recursively iterates through lemmas and analyzes their dependencies. Since multiple lemmas can depend on the same lemma, it is useful to avoid redundant or repeated analysis. To address this, we add a flag to each lemma indicating whether it has already been analyzed. This ensures that each lemma is analyzed at most once.

\mysubsection{Unused Lemma Deletion.} To reduce the memory consumption
of the DNN verifier during the verification process, we propose to remove unused lemmas as early as possible. Since each lemma is applicable only in the search state $S$ where it was deduced and in all descendant states of $S$, it becomes irrelevant once the sub-tree rooted at $S$ has been fully explored. At that point, any lemma in $S$ that was not involved in any dependency analysis can be safely deleted.

\mysubsection{Supporting Additional Constraints.} Although so far we
focused on DNNs with \relu{} constraints, our minimization algorithms
can be applied to additional kinds of constraints in a
straightforward manner. Therefore, assuming the underlying verifier
supports proof production for additional kinds of piecewise-linear constraints, the
minimization algorithms are readily applicable. 

\section{Implementation and Evaluation}
\label{sec:Evaluation}

To assess the effectiveness of our proposed proof minimization techniques, we conducted an empirical evaluation across six benchmark suites in neural network verification (ordered alphabetically): \begin{inparaenum}[(i)]
	\item the \textsc{Acas-Xu} drone collision avoidance networks~\cite{JuKoOw19};
	\item the \textsc{cersyve} benchmark~\cite{YaHuWeLiLi25} for verification of neural safety certificate in control systems;
	\item a subset of the \textsc{Cora} benchmark~\cite{KoLaAl24,BrBaJoWu24}, where disjuncts from the original queries where separated into different queries;
	\item \textsc{MaxLeaky}, a fresh benchmark that includes simple queries over a newly-trained DNN employing \emph{Maxpool} and \emph{LeakyRelu}~\cite{DuSiCh22} activation functions;
	\item a robotic navigation system benchmark~\cite{AmCoYeMaHaFaKa23}; and,
	\item the \textsc{SafeNLP} benchmark~\cite{CaArDaIsDiKiRiKo23,CaDiKoArDaIsKaRiLe25}.
\end{inparaenum}
These benchmarks were selected due to their relevance to
safety-critical applications their diversity in size and activation
functions, and their established use in previous work or at the DNN verification competition (VNNCOMP)~\cite{BrBaJoWu24,BrMuBaJoLi23}. Notably, we included all VNNCOMP benchmarks for which the proof-producing version of Marabou reasonably scales.

\subsection{Experimental Setup and Implementation}
Experiments were conducted on machines running Debian 12
Linux, each on a single CPU. We
divided the benchmarks into two sets according to network sizes, with
\textsc{Acas-Xu} and \textsc{Cora} consisting of larger DNNs, and the
remaining benchmarks consisting of smaller DNNs.
Memory and time configurations depended on category: for the larger
benchamrks we used 16GB RAM and a 5-hour \textsc{Timeout}, and for the
smaller ones --- 2GB RAM and a 2.5-hour \textsc{Timeout}.

Our implementation is based on the proof-producing version of the Marabou DNN verifier~\cite{KaHuIbJuLaLiShThWuZeDiKoBa19, WuIsZeTaDaKoReAmJuBaHuLaWuZhKoKaBa24,IsBaZhKa22}, into which we integrated our proof minimization algorithms and which we used as a baseline. The key metrics evaluated include \begin{inparaenum}[(i)]
	\item \emph{Proof size:} the number of proof vectors (for both lemmas and leaves) in the produced proofs of \unsat{} queries;
	\item \emph{Total Runtime:} the total runtime of the verification process, for all queries; and
    \item \emph{Total Proof-Checking time:}  the total proof-checking time for
          the \unsat{} queries, using the native Marabou proof checker.
\end{inparaenum} 
Note that a partial proof object is constructed during verification in
satisfiable queries as well (before concluding \sat); and thus, measuring verification time for all queries offers a more comprehensive comparison.

\subsection{Evaluation}
We begin by presenting a summary of the experimental results in~\Cref{table:evaluation_results}, which reports the average proof size generated during proof production, the average total verification time, and the average total proof-checking time, for each benchmark and proof minimization technique. In this table, averaging was conducted with respect to queries solved by all methods.
Furthermore, we provide a visual summary in~\Cref{fig:allres}, which
compares the proof sizes of all \unsat{} queries, and overall runtime across all queries.
We evaluated three variants of our method: \begin{inparaenum}[(i)]
	\item \emph{Dependency Analysis} (\textbf{Dep. Analysis}), as described in~\Cref{alg:lemmadeduction};
	\item \emph{Dependency Minimization} (\textbf{Dep. Min.}),
          which includes the implementation of~\Cref{alg:proofmin} on top of~\Cref{alg:lemmadeduction}; and
	\item \emph{Global Dependency Minimization} (\textbf{Dep. Min. Glob.}), which replaces~\Cref{alg:proofmin} with~\Cref{alg:globproofmin}.
\end{inparaenum}
For a deeper study of the effect on the total runtime, we consider
an additional running configuration, \textbf{Min. Baseline}, which
includes all data structures used for implementing our algorithms as
exemplified in~\Cref{sec:proofanalysis}, but without actually executing~\Cref{alg:lemmadeduction}. By running this version, we are able to separate the running time effect of the changes in data structures, from the effect of the analysis and minimization algorithms themselves.
All methods were implemented with the optimizations detailed in~\Cref{sec:misc}.

\begin{figure}[p]
	\centering
	\makebox[\textwidth][c]{
		\resizebox{1.1\textwidth}{!}{
			\begin{tabular}{m{0.3\textwidth} m{0.3\textwidth} m{0.3\textwidth}}
				\begin{minipage}[t]{\linewidth}\centering
					\textbf{Acas-Xu}\\
					\includegraphics[width=\linewidth]{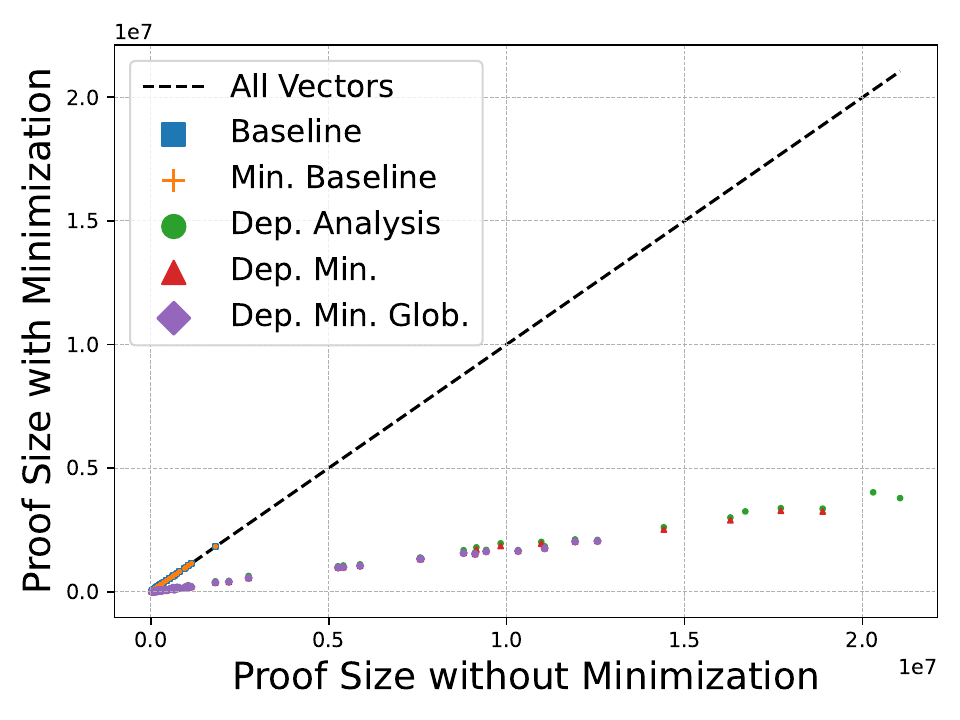}\\
					\includegraphics[width=\linewidth]{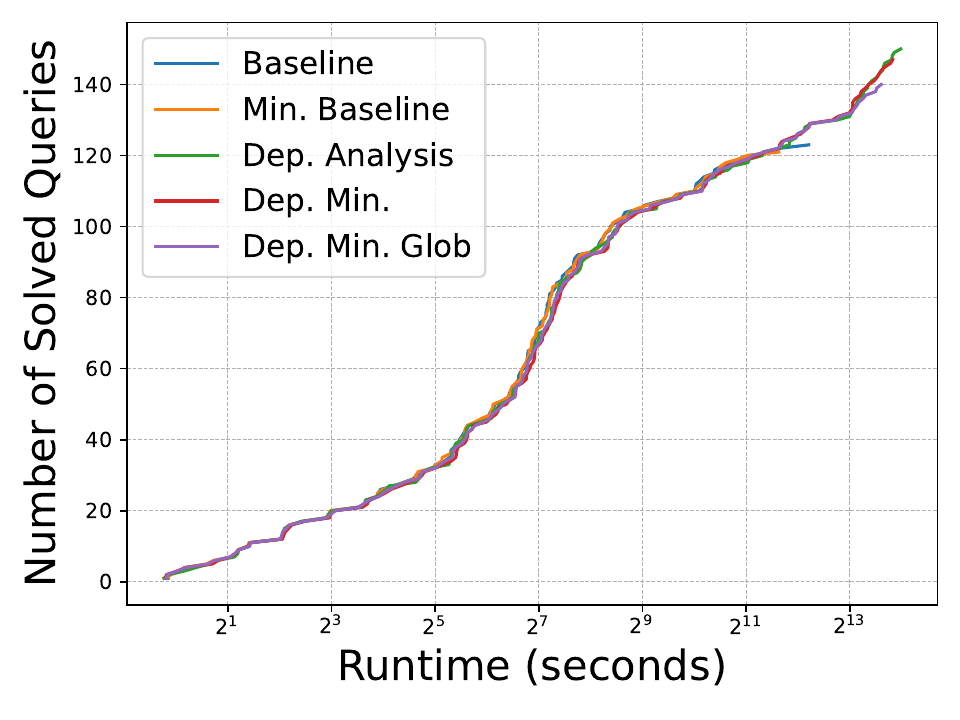}
				\end{minipage} &
				\begin{minipage}[t]{\linewidth}\centering
					\textbf{Cersyve}\\
					\includegraphics[width=\linewidth]{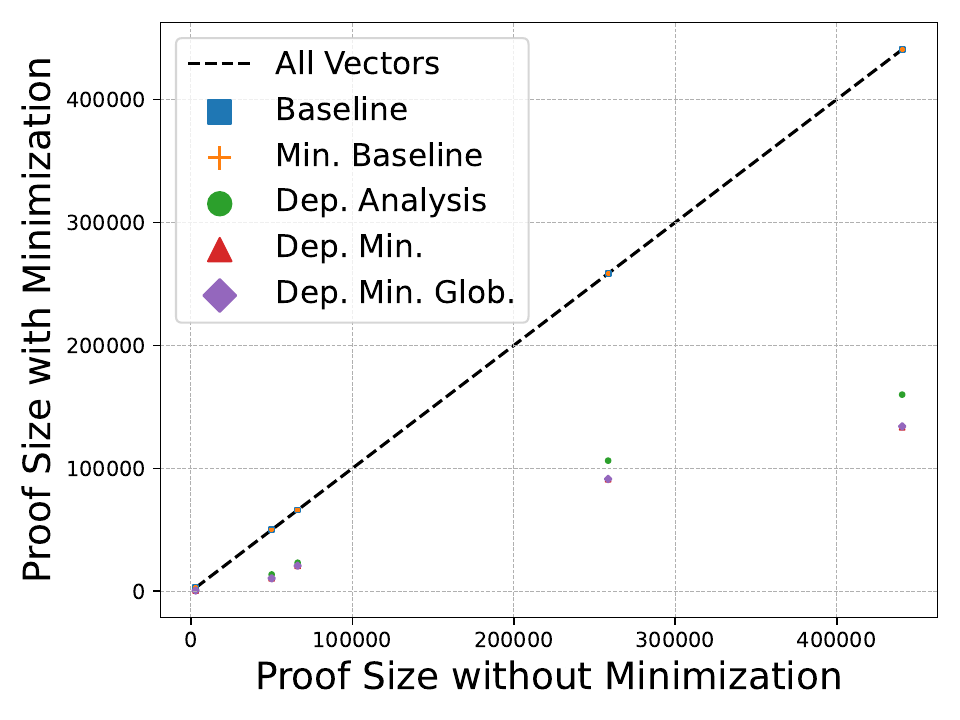}\\
					\includegraphics[width=\linewidth]{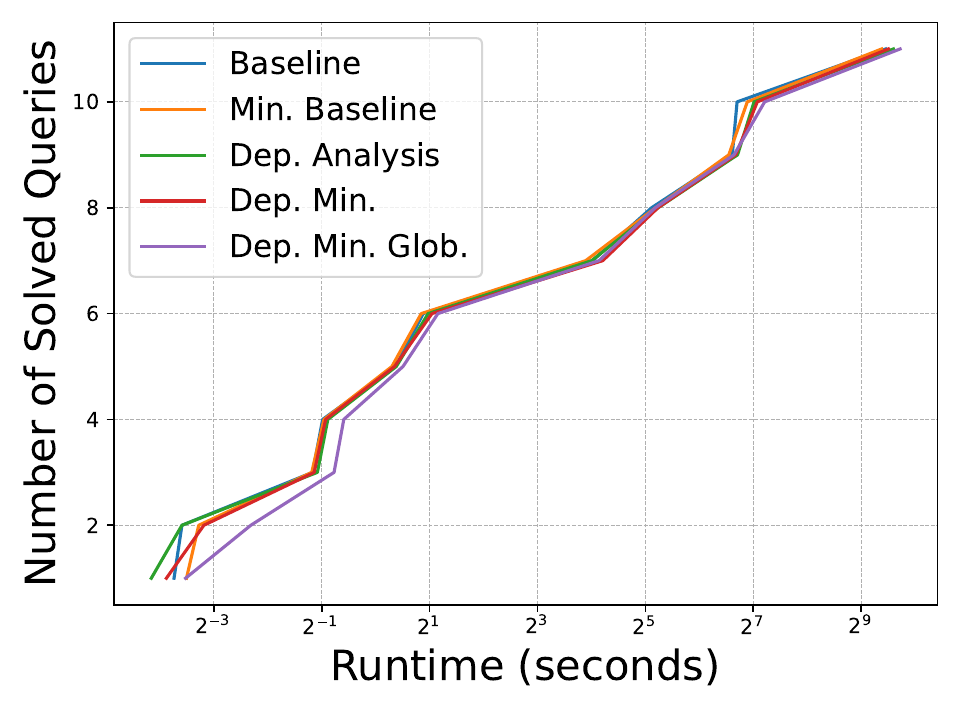}
				\end{minipage} &
				\begin{minipage}[t]{\linewidth}\centering
					\textbf{Cora}\\
					\includegraphics[width=\linewidth]{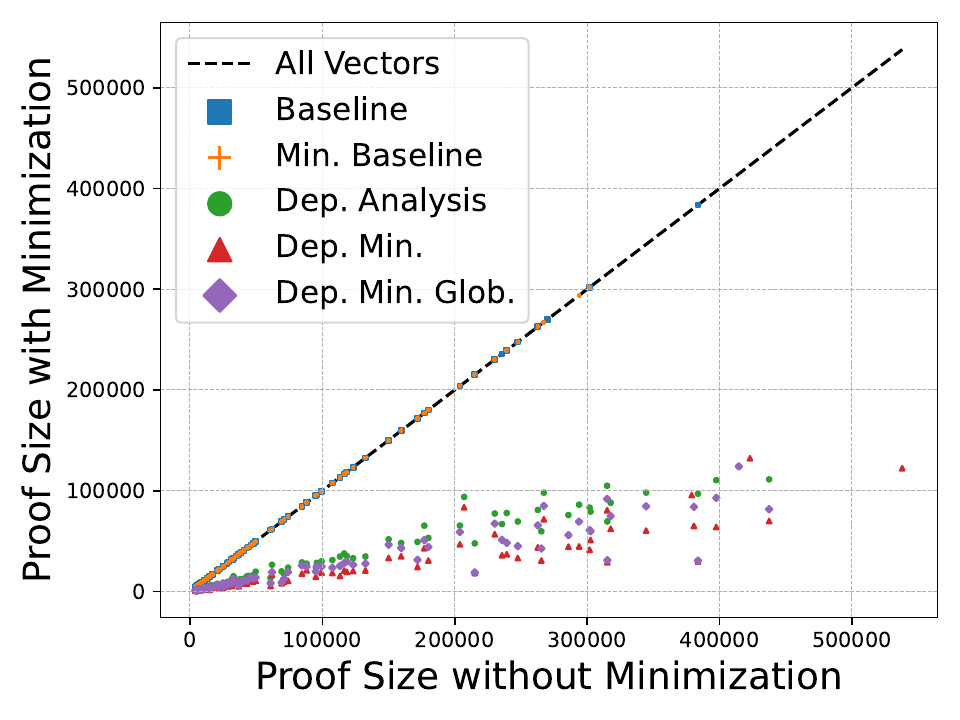}\\
					\includegraphics[width=\linewidth]{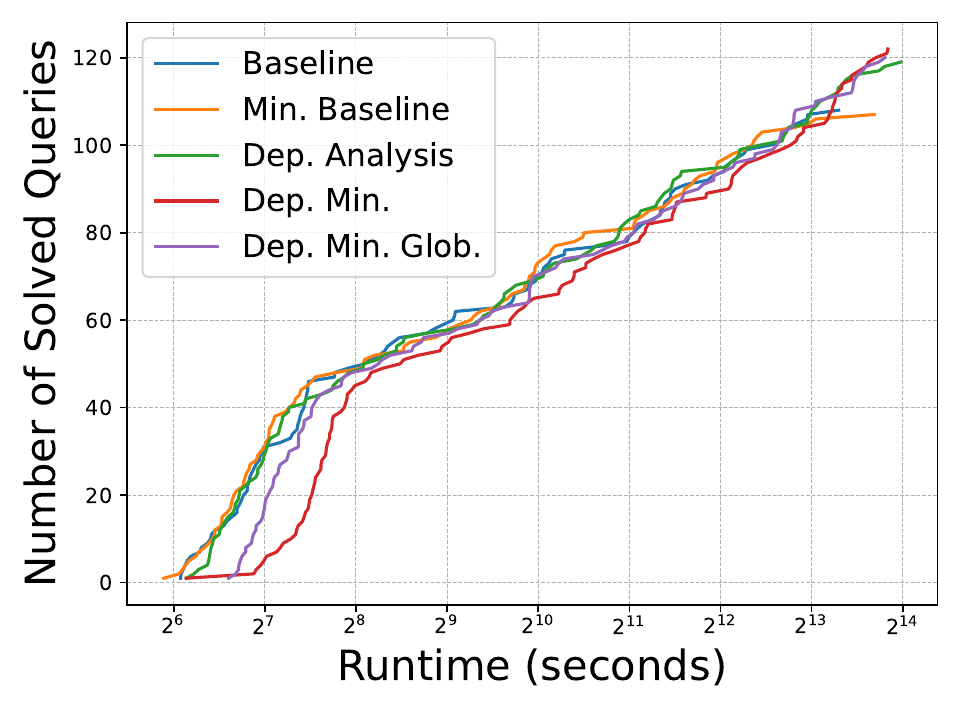}
				\end{minipage} \\
				\begin{minipage}[t]{\linewidth}\centering
					\textbf{Maxleaky}\\
					\includegraphics[width=\linewidth]{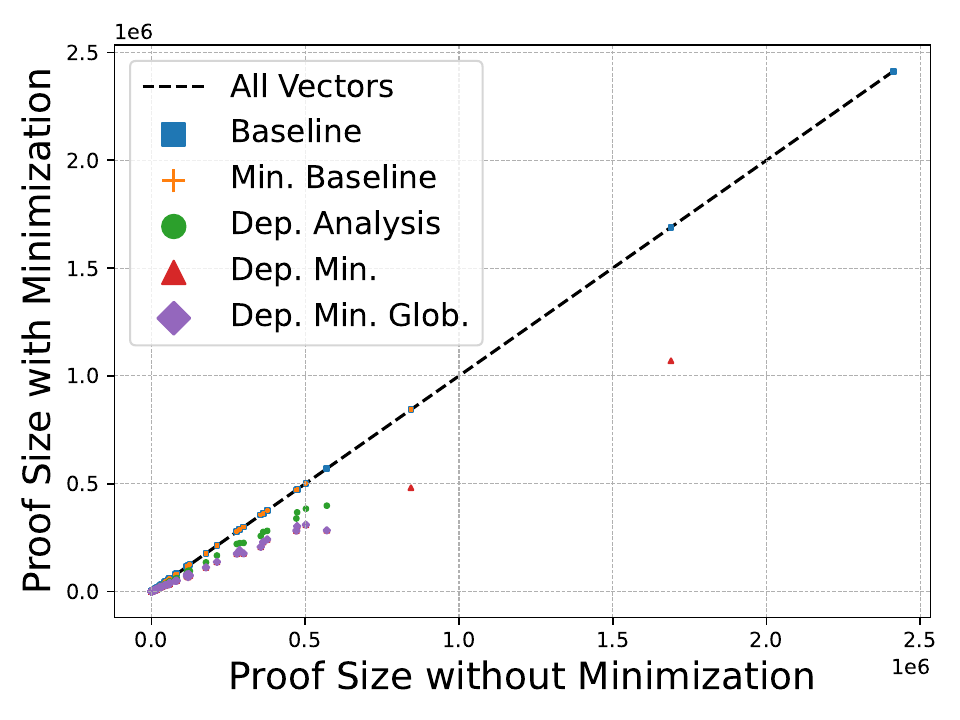}\\
					\includegraphics[width=\linewidth]{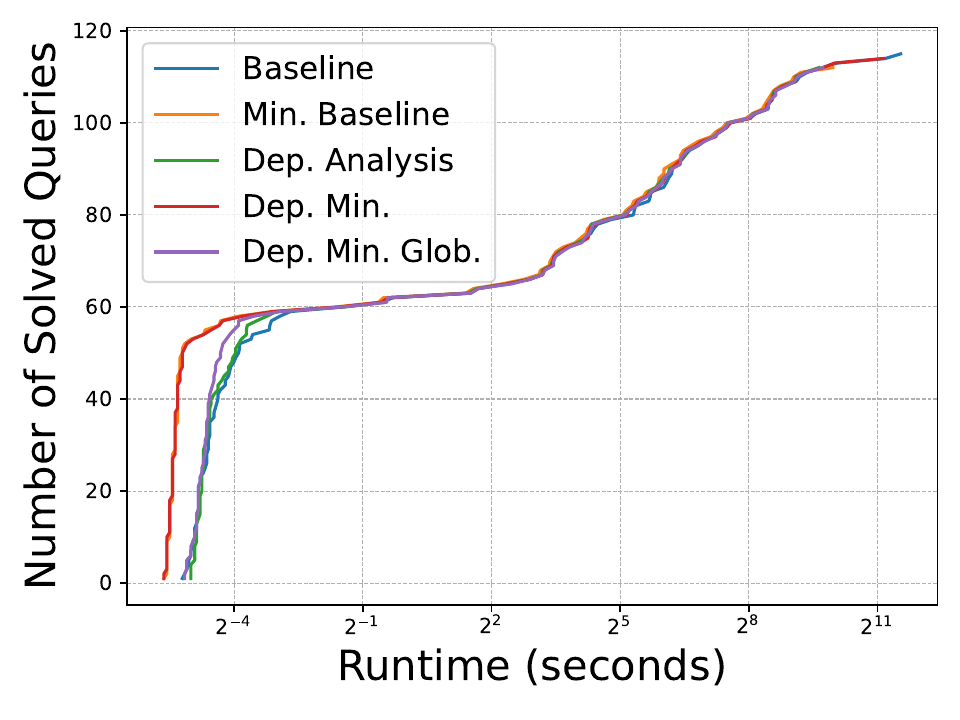}
				\end{minipage} &
				\begin{minipage}[t]{\linewidth}\centering
					\textbf{Robotics}\\
					\includegraphics[width=\linewidth]{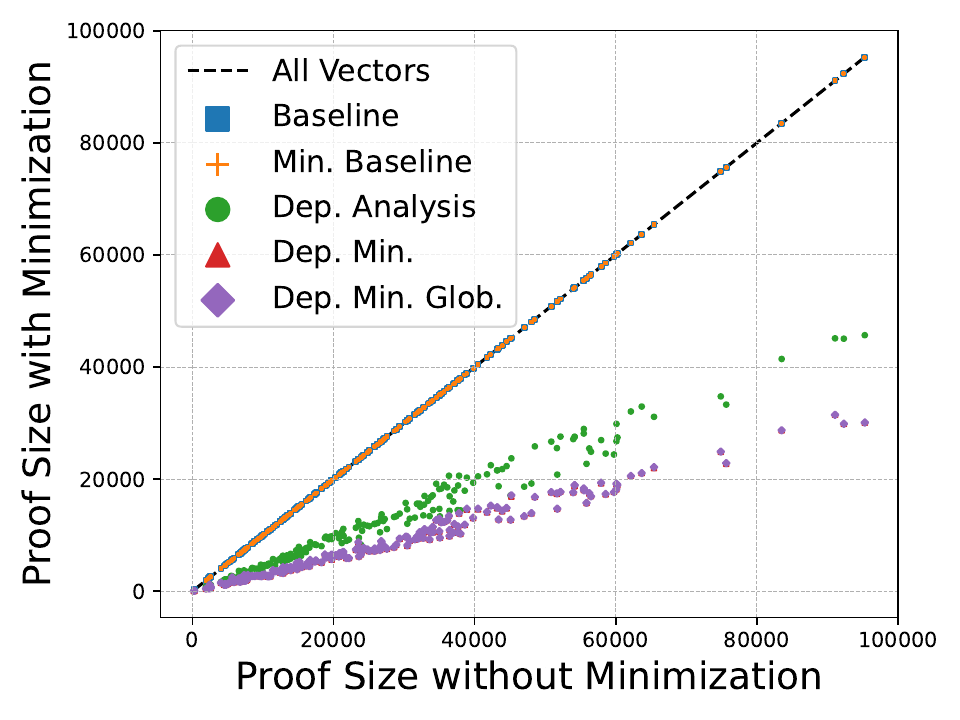}\\
					\includegraphics[width=\linewidth]{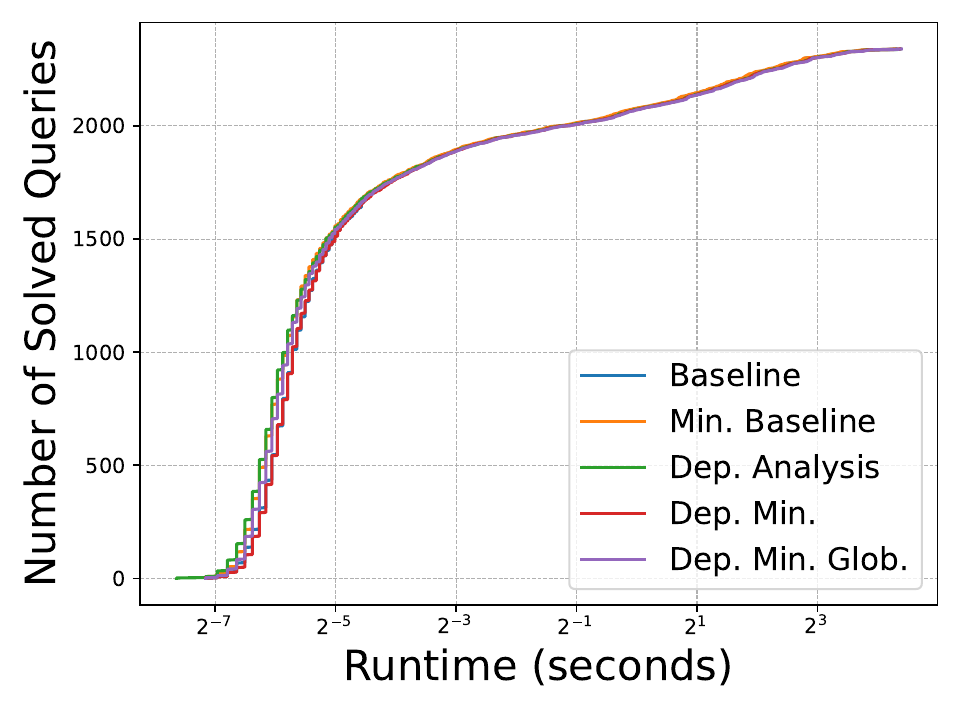}
				\end{minipage} &
				\begin{minipage}[t]{\linewidth}\centering
					\textbf{SafeNLP}\\
					\includegraphics[width=\linewidth]{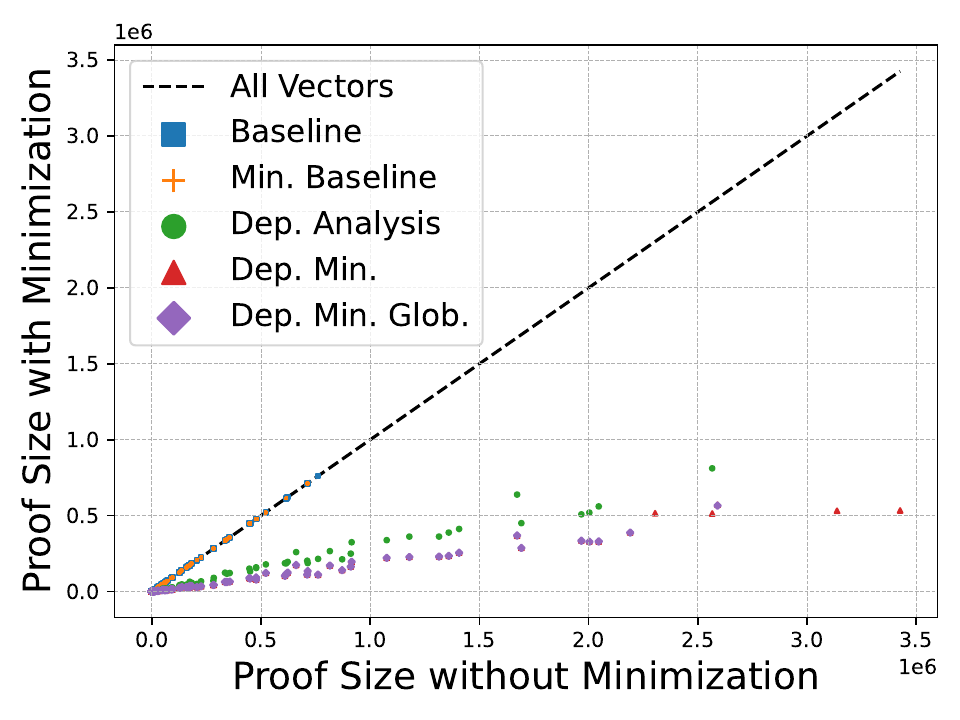}\\
					\includegraphics[width=\linewidth]{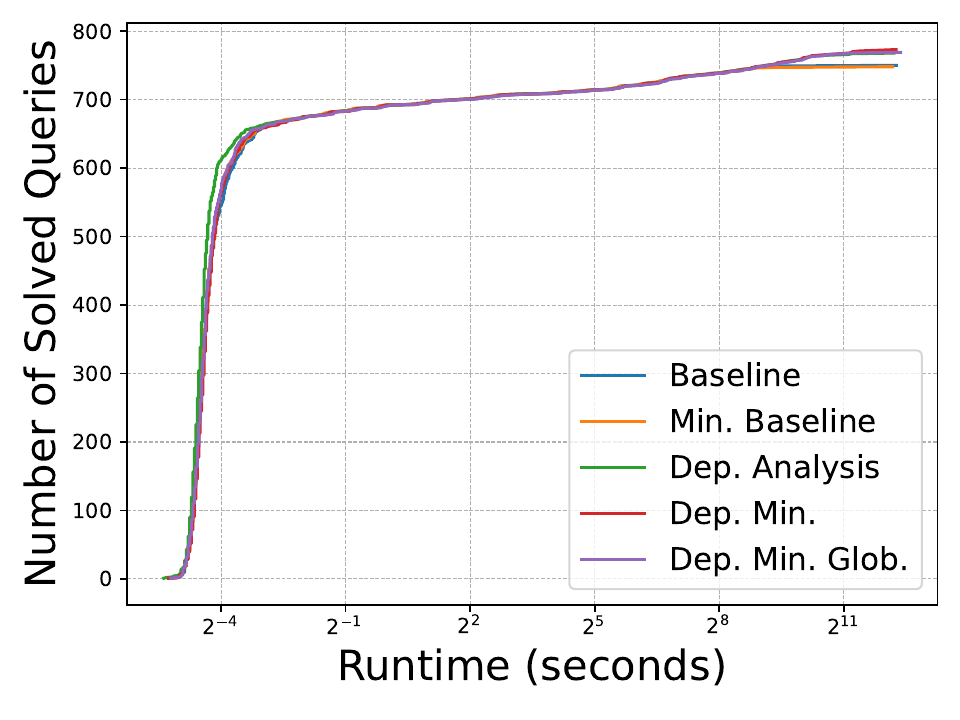}
				\end{minipage}
			\end{tabular}
		}
	}
	\caption{Proof size and runtime comparison across all benchmarks and methods.}
	\label{fig:allres}
\end{figure}

The results demonstrate a clear reduction in average proof size,
achieving a reduction rate of 37\%--82\%. The incurred verification
time overhead is 5\%--24\% for \textbf{Dep. Min.} and 7\%--20\% for \textbf{Dep. Min. Glob.}.
In addition, the results indicate a reduction in proof checking time
of up to 88\%. We observe that this reduction is not necessarily
correlated with the reduction rate of proof sizes; and we hypothesize
that this stems from the fact that checking time of each lemma is
dependent on its sparsity, and that our approach retains sparser
lemmas in the proof tree. 
Comparing the minimization algorithms, the results show
that~\Cref{alg:lemmadeduction} is responsible for most of the removed
lemmas, and both~\Cref{alg:proofmin} and~\Cref{alg:globproofmin}
improve it further with generally similar impact, except for the
\textsc{Cora} benchmarks. However, both algorithms incur a small
overhead comparing to using \Cref{alg:lemmadeduction} alone.
Analyzing \textbf{Min. Baseline} indicates that changes in data structures contributed to a minor improvement in both verification time and proof-checking time, for most benchmarks. This has moderated the time overhead of the minimization algorithms. As expected, bookeeping more information slightly increased the number of overall \textsc{Memout}s of \textbf{Min. Baseline} compared to \textbf{Baseline}. When applying the minimization algorithms however, the number of \textsc{Memout}s decreases significantly, due to the reduction of proof size (see subsections below).

In the subsequent subsections, we provide a more detailed analysis of the results for each benchmark suite.

\begin{table}[h!]
	\vspace{-0.8cm}
	\scriptsize
	\centering
	\caption{Average proof size (number of leaves and lemmas), verification time for all queries and proof-checking time for \unsat{} queries, using Marabou across benchmarks. Lower values indicate more compact proofs. Each non-baseline entry includes both reported numbers, and percentage of change with respect to baseline. Best results among our three algorithms are marked in \textbf{bold}.}
	\label{table:evaluation_results}
		\makebox[\textwidth][c]{
	\begin{tabular}{|c|c|c|c|c|c|}
		 \hline
		\textbf{Benchmark} & \multicolumn{5}{c|}{\makecell{\textbf{Avg. Proof Size [$\pm$\% of Baseline]} \\ \Xhline{0.1pt}\textbf{Avg. Verification Time (Sec.) [$\pm$\% of Baseline]} \\ \Xhline{0.1pt} \textbf{ Avg. Checking Time (Sec.) [$\pm$\% of Baseline]}}} \\
		\cline{2-6}
		& \textbf{Baseline\cite{IsBaZhKa22}} & \textbf{Min. Baseline} & \textbf{Dep. Analysis} & \textbf{Dep. Min.} &  \textbf{Dep. Min. Glob.} \\
		\hline
		\textbf{Acas-Xu} &
			\makecell{302021.14 \\ \\ \Xhline{0.1pt} 265.98 \\ \\ \Xhline{0.1pt} 66.37 \\ } &
			\makecell{302021.14 \\ 0\% \\ \Xhline{0.1pt} 269.42 \\ +1.2\% \\ \Xhline{0.1pt} 46.58 \\ -29.8\%} &
	 		\makecell{59231.07 \\ -80.4\% \\ \Xhline{0.1pt} 299.96 \\ +12.8\% \\ \Xhline{0.1pt} 8.67 \\ -86.9\% } &
	 	 	\makecell{\textbf{ 52890.17} \\ \textbf{-82.5\%}  \\ \Xhline{0.1pt} 294.58 \\ +10.7\% \\ \Xhline{0.1pt} \textbf{7.78} \\ \textbf{-88.3\% }} &
	 		\makecell{53144.84 \\ -82.4\% \\ \Xhline{0.1pt}\textbf{ 288.42} \\ \textbf{+8.4\%} \\ \Xhline{0.1pt} 8.15 \\ -87.7\% } \\ 
		\hline
				 \textbf{cersyve}& \makecell{163592.4\\ \\  \Xhline{0.1pt} 86.99 \\ \\ \Xhline{0.1pt} 13.77 \\ } & \makecell{163592.4 \\ 0\% \\\Xhline{0.1pt} 84.84 \\ -2.5\% \\\Xhline{0.1pt} 13.66 \\ -0.8\% } & \makecell{60764.6 \\ -62.9\% \\ \Xhline{0.1pt} 96.90 \\ +11.4\% \\ \Xhline{0.1pt} 3.14 \\ -77.2\%} & \makecell{\textbf{51283.6} \\ \textbf{-68.6\%}\\ \Xhline{0.1pt} \textbf{93.02} \\ \textbf{+6.9\% } \\ \Xhline{0.1pt} \textbf{2.62} \\ \textbf{-81\% }} & \makecell{51492.6 \\ -68.5\% \\ \Xhline{0.1pt} 104.57 \\ +20.2\%\\ \Xhline{0.1pt} 2.77 \\ -79.9\% } \\ 
		\hline
				 \textbf{Cora}& \makecell{49202.05\\ \\ \Xhline{0.1pt} 1277.88 \\\\ \Xhline{0.1pt} 1633.47 \\ } &
				  \makecell{49202.05 \\ 0\% \\\Xhline{0.1pt} 1231.5 \\ -3.6\% \\\Xhline{0.1pt} 1572.66 \\ -3.7\% } &
				   \makecell{15199.58 \\ -69.1\% \\ \Xhline{0.1pt} \textbf{1262.87} \\ \textbf{-1.2\%} \\ \Xhline{0.1pt} 455.88 \\ -72.1\% } & 
				   \makecell{\textbf{8819.75} \\ \textbf{-82.1\%} \\ \Xhline{0.1pt} 1583.02 \\ +23.9\% \\ \Xhline{0.1pt} \textbf{266.17} \\ \textbf{-83.7\%} } &
				   \makecell{11549.88 \\ -76.5\% \\ \Xhline{0.1pt} 1405.65 \\ +10\%\\ \Xhline{0.1pt} 365.32 \\ -77.6\%} \\ 
		\hline
				 \textbf{MaxLeaky}& 
				 \makecell{55342.88 \\ \\ \Xhline{0.1pt} 60.85 \\ \\ \Xhline{0.1pt} 2.67 \\ } & 
				 \makecell{55342.88 \\ 0\% \\\Xhline{0.1pt} 56.34 \\ -7.4\% \\\Xhline{0.1pt} 2.63  \\ -1.5\% } & 
				 \makecell{42224.96 \\ -23.7\% \\ \Xhline{0.1pt} \textbf{59.01} \\ \textbf{ -3\%}\\ \Xhline{0.1pt} 2.07 \\ -22.5\%} &  
				 \makecell{34852.52 \\ -37\% \\ \Xhline{0.1pt} 59.7 \\ -1.9\% \\ \Xhline{0.1pt} \textbf{1.89} \\ \textbf{-29.2\%}}& 
				 \makecell{\textbf{34847.12} \\ \textbf{-37\% }\\ \Xhline{0.1pt} 60.01 \\ -1.4\% \\ \Xhline{0.1pt} \textbf{1.89} \\ \textbf{-29.2\%}} \\ 
		\hline
				 \textbf{Robotics}& 
				 \makecell{25549.65\\ \\ \Xhline{0.1pt} 0.55\\ \\ \Xhline{0.1pt} 0.32\\ } &
				  \makecell{25549.65 \\ 0\% \\\Xhline{0.1pt} 0.53 \\ -3.6\%  \\\Xhline{0.1pt} 0.31 \\-3.1\%} & 
				  \makecell{11981.7 \\ -53.1\% \\ \Xhline{0.1pt} \textbf{0.57} \\ \textbf{ +3.6\%}\\ \Xhline{0.1pt} 0.15 \\ -53.1\%} &
				  \makecell{\textbf{8016.64} \\ \textbf{-68.6\%}\\ \Xhline{0.1pt} 0.58 \\ +5.4\%\\ \Xhline{0.1pt} \textbf{0.11} \\ \textbf{-65.6\%}} & 
				  \makecell{8026.99 \\ -68.6\% \\ \Xhline{0.1pt} 0.59 \\ +7.3\% \\ \Xhline{0.1pt} \textbf{0.11} \\ \textbf{-65.6\%}} \\ 
		\hline
				 \textbf{SafeNLP}& \makecell{55066.71\\\\ \Xhline{0.1pt} 14.69 \\\\ \Xhline{0.1pt} 5.89 \\ } &
				 \makecell{55066.71\\ 0\% \\\Xhline{0.1pt} 14.27 \\ -2.8\% \\\Xhline{0.1pt} 7.78 \\ +32\% } &
				 \makecell{17135.91 \\ -68.9\% \\ \Xhline{0.1pt} \textbf{14.73} \\ \textbf{+0.3\%}\\ \Xhline{0.1pt} 1.59 \\ -73\%} &
				 \makecell{10118.98 \\ -81.6\% \\ \Xhline{0.1pt} 14.94 \\ +1.7\% \\ \Xhline{0.1pt} \textbf{0.96} \\ \textbf{-83.7\%}} &  \makecell{\textbf{10062.95} \\ \textbf{ -81.7\%}\\ \Xhline{0.1pt} 15.88 \\ +8.1\%\\ \Xhline{0.1pt} 0.97 \\ -83.5\%}\\ 
		\hline
	
	\end{tabular}
	}
		\vspace{-0.4cm}
\end{table}
\subsubsection{\textsc{Acas-Xu} Benchmark.}
The \textsc{Acas-Xu} suite consists of fully connected feedforward networks
trained to support drone collision avoidance decision-making. We
ran experiments on a set of 45 networks, each comprised of 6 hidden
layers, each of which contained 50 neurons with \relu{} activation
functions. For each network we verified 4 different specifications~\cite{KaBaDiJuKo21}.
This results in a total of 180 queries.
The comparison in~\Cref{table:evaluation_results} was conducted on 82 \unsat{} and 39 \sat{} queries solved by all methods. Notably, a single query was either solved with a failure during proof certification, or terminated due to memory out. This result is consistent with the original evaluation of the \textsc{Acas-Xu} benchmark in~\cite{IsBaZhKa22}, and we hypothesize that for this particular verification instance, the proof-producing Marabou is numerically unstable.  

To assess robustness under resource constraints, we also report the
number of queries that were successfully solved within the 5-hour
timeout and 16GB memory limit. Table~\ref{table:acas_solved}
summarizes the number of solved queries for each method. As shown, all three
minimization methods outperform the baseline, solving more queries by avoiding memory blowup
during proof construction.

\begin{table}[h]
	\vspace{-0.8cm}
	\centering
	\caption{Number of \textsc{Acas-Xu} \sat{} queries solved, \unsat{} queries solved, timed out, and failed due to memory exhaustion, out of 180 queries.}
	\label{table:acas_solved}
	\begin{tabular}{|l|c|c|c|c|}
		\hline
		\textbf{Method} & \textbf{\#\sat{}} & \textbf{\#\unsat{}} & \textbf{\#Timeout} & \textbf{\#Memout} \\
		\hline
		Baseline & 41 & 82 & 0 & 57 \\
		\hline
		Min. Baseline & 39 & 82 & 0 & 59 \\
		\hline
		Dep. Analysis & 44 & 107 & 29 & 0 \\
		\hline
		Dep. Min. & 44 & 104 & 21 & 11 \\
		\hline
		Dep. Min. Glob. & 44 & 96 & 20 & 20 \\
		\hline
	\end{tabular}
	\vspace{-0.8cm}
\end{table}

\subsubsection{cersyve Benchmark.}
The \textsc{cersyve} suite consists of fully connected feedforward networks
serving as neural certificates for control systems. We
ran experiments on a set of 12 networks, with 65 to 198
\relu{} neurons. For each network we verified a single property, as in the VNNCOMP 2025 benchmark~\cite{VNNCOMP25}.
This results in a total of 12 queries, out of which 6 are \unsat. For this benchmark, all methods successfully verified 11 queries, whereas a single \unsat{} query fails due to memory restrictions when tested on all methods. 

\subsubsection{Cora Benchmark.}
The \textsc{Cora} benchmark consists of local adversarial robustness queries for image classifiers trained on several datasets. For our experiments, we used a subset of these queries on a feedforward DNN with seven layers of 250 \relu{} neurons each, trained on the MNIST dataset. To extend the benchmark and improve scalability, each query was further divided into nine sub-queries, each corresponding to a single possible misclassification option. In total, this yields 189 queries.
The comparison in~\Cref{table:evaluation_results} was conducted on 104 \unsat{} queries solved by all methods.

In~\Cref{table:cora_solved}, we report the number of instances verified under a time limit of 5 hours and a memory limit of 16GB. As previously observed, the analysis and minimization algorithms increase the total number of solved queries, this time primarily due to improvements in overall running time. Since the reduction in proof-checking time is more significant than the reduction in verification time (as shown in~\Cref{table:evaluation_results}), we conclude that the overall improvement is mainly attributable to the reduced time of proof-checking.

\begin{table}[h]
	\vspace{-0.8cm}
	\centering
	\caption{Number of \textsc{Cora} \sat{} queries solved, \unsat{} queries solved, timed out, and failed due to memory exhaustion, out of 189 queries.}
	\label{table:cora_solved}
	\begin{tabular}{|l|c|c|c|c|}
		\hline
		\textbf{Method} & \textbf{\#\sat{}} & \textbf{\#\unsat{}} & \textbf{\#Timeout} & \textbf{\#Memout} \\
		\hline
		Baseline & 1 & 107 & 81 & 0 \\
		\hline
		Min. Baseline & 1 & 106 & 81 & 1 \\
		\hline
		Dep. Analysis & 1 & 118 & 70 & 0 \\
		\hline
		Dep. Min. & 0 & 122 & 67 & 0 \\
		\hline
		Dep. Min. Glob. & 1 & 119 & 69 & 0 \\
		\hline
	\end{tabular}
	\vspace{-0.8cm}
\end{table}

\subsubsection{MaxLeaky Benchmark.}
The \textsc{MaxLeaky} benchmark includes a single DNN trained to classify wine types based on chemical composition~\cite{CoCeAnFeMaRe09}. To test our approach over a DNN which employs various activation functions,  we trained a fresh DNN with 64 \relu, 32 LeakyReLU, and 16 Maxpool neurons, to classify the dataset. We then generated a single adversarial robustness query for 133 train and 44 test data points. This results in a total of 177 queries.
The comparison in~\Cref{table:evaluation_results} was conducted on 99 \unsat{} and 12 \sat{} queries solved by all methods.

\Cref{table:maxleaky_solved} includes the number of instances verified under the limitation of 2.5-hour time and 2GB memory.
The results indicate a minor reduction in the number of solved queries when introducing the minimization algorithms. This result is unique, and is likely a result of the relatively moderate reduction of proof size (see~\Cref{table:evaluation_results}), combined with the small overhead introduced when keeping additional information for each lemma (see~\Cref{sec:proofanalysis}).

\begin{table}[h]
	\vspace{-0.8cm}
	\centering
	\caption{Number of \textsc{MaxLeaky} \sat{} queries solved, \unsat{} queries solved, timed out, and failed due to memory exhaustion, out of 177 queries.}
	\label{table:maxleaky_solved}
	\begin{tabular}{|l|c|c|c|c|}
		\hline
		\textbf{Method} & \textbf{\#\sat{}} & \textbf{\#\unsat{}} & \textbf{\#Timeout} & \textbf{\#Memout} \\
		\hline
		Baseline & 12 & 163 & 1 & 1 \\
		\hline
		Min. Baseline & 12 & 160 & 0 & 5 \\
		\hline
		Dep. Analysis & 12 & 160 & 1 & 4 \\
		\hline
		Dep. Min. & 12 & 162 & 0 & 3 \\
		\hline
		Dep. Min. Glob. & 12 & 160 & 0 & 5 \\
		\hline
	\end{tabular}
	\vspace{-0.8cm}
\end{table}

\subsubsection{Robotics Navigation Benchmark.}
The robotic navigation benchmark~\cite{AmCoYeMaHaFaKa23} involves DNN controllers for obstacle avoidance.
We tested our approach on 2340 queries from this benchmark, where 2126 of them are \sat{} queries, and 214 are \unsat{}. Each query
includes a DNN with the \relu{} activation only. The DNN is comprised of an input layer with 9 neurons, two hidden layers with 16 neurons each, and an output layer of 3 neurons. 
For this benchmark, all methods were able to verify all queries and produce proofs in the time and memory limitations. 

\subsubsection{SafeNLP Benchmark.}
The \textsc{safeNLP} benchmark~\cite{CaArDaIsDiKiRiKo23,CaDiKoArDaIsKaRiLe25} examines robustness to word- and sentence-level perturbations in the embedding space of several sentences over two fully-connected DNNs. Both DNNs have the same architecture, which is comprised of an input layer of 30 nodes, a single hidden layer of 128 nodes with ReLU activation functions, and an output layer of 2 nodes. The benchmark consists of 1080 queries in total. The comparison in~\Cref{table:evaluation_results} was conducted on 149 \unsat{} and 599 \sat{} queries solved by all methods.

Table~\ref{table:safenlp_solved} reports the number of queries successfully verified within the resource limits (2.5 hours and 2GB RAM). Here again, our minimization methods demonstrate improved performance, reducing memory exhaustion and increasing the number of successful verification queries compared to the baseline.

\begin{table}[h]
	\vspace{-0.8cm}
	\centering
	\caption{Number of safeNLP \sat{} queries solved, \unsat{} queries solved, timed out, and failed due to memory exhaustion, out of 1080 queries.}
	\label{table:safenlp_solved}
	\begin{tabular}{|l|c|c|c|c|}
	\hline
	\textbf{Method} & \textbf{\#\sat{}} & \textbf{\#\unsat{}} & \textbf{\#Timeout} & \textbf{\#Memout} \\
	\hline
	Baseline & 599 & 196 & 13 & 272 \\
	\hline
	Min. Baseline & 599 & 194 & 7 & 280 \\
	\hline
	Dep. Analysis & 599 & 214 & 36 & 231 \\
	\hline
	Dep. Min. & 599 & 219 & 54 & 208 \\
	\hline
	Dep. Min. Glob. & 599 & 215 & 42 & 224 \\
	\hline
\end{tabular}
\vspace{-0.8cm}
\end{table}

\section{Related Work}
\label{sec:relwork}
Producing proofs as witnesses of correctness is a common practice in both SMT and SAT solving. In both communities, the proof size tends to be significant, limiting their applicability~\cite{BaDeFo15, HeBi15,SoBi09, BaBaCoDuKrLaNiNoOzPrReTiZo23}. Our work builds on concepts inspired by \emph{lazy proofs}~\cite{KaBaTiReHa16}, where  proof objects are constructed gradually, using only necessary lemmas. 
In DNN verification, proof production has  been studied only recently, and in a handful of works~\cite{IsBaZhKa22, SiSaMeSi25}. The authors of~\cite{SiSaMeSi25} consider a symbolic approach for increasing the reliability of DNN verifiers, which reduces the problem to an additional SMT query. Our approach is then not directly applicable to~\cite{SiSaMeSi25}. 

The study of dependencies analysis of \unsat{} results has been
explored in various contexts~\cite{Ju01, Ju04, BrMa08, GuStTr16,
	LiSa08} often using the name \emph{\unsat{} core}. To the best of our knowledge, ours is the first to
investigate this in the context of DNN verification.
In DNN verification, previous studies have attempted to analyze dependencies between the
phases of neurons as part of the DNN verification
process~\cite{BoKoPaKrLo20}. As fixing neuron phases could be captured by proof lemmas, it would be interesting to study the synergies between that approach and ours.

In linear programming, the well-established concept of \emph{Irreducible Infeasible Systems}~\cite{ChDr91} is defined as a minimal set of constraints that maintain unsatisfiablity. The method to detect IISs from~\cite{ChDr91} has been used in DNN verifiers~\cite{Eh17}, and is implemented in the Gurobi linear optimizer~\cite{gurobi} often used in DNN verifiers.
Although  IIS detection algorithms typically rely on repeated LP solver invocations, which can be computationally expensive, it presents a compelling direction for future exploration in this work's context.

\section{Conclusion and Future Work}
\label{sec:conclusion}
In this work, we developed methods to reduce the size of \unsat{} proofs generated by DNN verifiers, as presented in~\cite{IsBaZhKa22}. These proofs can be checked by a trusted, external proof checker in order to improve reliability of the DNN verifier. Importantly, DNN verifiers currently assume ideal mathematical implementation of the DNN itself, and modeling any specific DNN implementations remains a challenge for the DNN verification community in general~\cite{CoDaGiIsJoKaKoLeMaSiWu25}. 
Our goal here is to improve memory consumption in proof-producing verifiers and reduce proof checking time, thereby enhancing the practicality of reliable DNN verification. 

To this end, we designed an algorithm that analyzes, for each lemma in the proof, its dependencies on previously derived lemmas. This enables us to eliminate lemmas that are eagerly learned but are ultimately irrelevant for proving \unsat. In addition, we designed an algorithm to minimize the number of such dependencies, with two variants. All algorithms have been implemented on top of the proof-producing version of the Marabou DNN verifier~\cite{WuIsZeTaDaKoReAmJuBaHuLaWuZhKoKaBa24}.

Our results show that~\Cref{alg:lemmadeduction} reduces proof size by an order of magnitude across multiple benchmarks. The dependency minimization algorithms further reduces proof size, though more moderately. When comparing the two variants of the minimization algorithm, both versions achieve similar results on almost all benchmarks.
Moreover, the performance overhead introduced by our methods is reasonable, especially when considering improvements in proof checking time. Notably, the time overhead from the dependency analysis is reduced due to the use of some alternative data structures.

\mysubsection{Future Work.}
Moving forward, we intend to continue this work along several paths.
First, we aim to explore different variants of proof minimization. In
particular, our current algorithms consider a fixed proof vector $w$,
which we could try to alter in order to reduce the number of dependencies even further. Additionally, it would be interesting to explore additional alternatives to the minimization algorithm, attempting to remove dependencies according to other heuristics, e.g., by their stage of deduction. 
Furthermore, it would also be interesting to consider the sparsity of
the proof vectors and prioritize removing dense vectors, as they
(heuristically) depend on many other bound lemmas.

Second, we aim to extend the proof checker as
in~\cite{DeIsKoStPaKa25}, to support checking proofs of bound
tightening lemmas. We will then evaluate the reduction of
proof-checking time for this version of the checker as well.

A third direction for future research is extending our work to support additional subroutines commonly used in DNN verifiers, particularly abstract interpretation. This integration is inherently more complex, as it requires proving linear relaxations of non-linear constraints. However, once this support is established, we believe our proof minimization approach and technique could also be useful in this context.

Fourth, our work's analysis method paves the way for a \emph{conflict analysis} method, as a part of \emph{Conflict-Driven Clause Learning}\cite{SiSk96,SiSk99,BaSc97} scheme, which is commonly used in SAT and SMT solving~\cite{BaTi18,BiFaFaFlFrPo24}. Conceptually, in search-based solving, CDCL
identifies subspaces of the search space which are ``similar'' to
subspaces already traversed, and which were shown not to contain any
satisfying assignment. This
similarity guarantees that the new subspaces also do not contain any
such assignment, and they can be safely skipped by the
verifier.  A core component in CDCL is the generation of
\emph{conflict clauses}, a representation of an \unsat{} subspaces, that guide
the solver not to search in other subspaces that are also guaranteed
to be \unsat{}.   Naturally, the algorithm for deriving conflict clauses is a
key to a successful CDCL framework. Our work can serve as a basis for the derivation of potentially useful conflict clauses for DNN verifiers --- leading to a significant improvement in their performance.  Although several DNN verifiers employ CDCL or similar algorithms~\cite{LiYaZhHu24,DuNgDw23,Eh17,ZhBrHaZh24}, to the best of our knowledge, none of these  use \unsat{} proofs for deriving conflict clauses.

\mysubsection{Data Availability.} The implementation and scripts used for the experiments in~\Cref{sec:Evaluation} are publicly available~\cite{ReIs25}. 

\mysubsection{Acknowledgments.} 
The work of Isac, Refaeli and Katz was supported by the Binational
Science Foundation (grant numbers 2020250 and 2021769), the Israeli Science Foundation (grant number 558/24) and the European Union (ERC, VeriDeL, 101112713). Views and opinions expressed are however those of the author(s) only and do not necessarily reflect those of the European Union or the European Research Council Executive Agency. Neither the European Union nor the granting authority can be held responsible for them.

The work of Barrett was supported in part by the Binational Science Foundation (grant number 2020250), the National Science Foundation (grant numbers 1814369 and 2211505), and the Stanford Center for AI Safety.

\bibliographystyle{abbrv}
\bibliography{proofmin}


\renewcommand\thesection{\Alph{section}}
\renewcommand\thesubsection{\thesection.\arabic{subsection}}
\setcounter{section}{0}

\end{document}